\newcommand{\be}{\begin{equation}}
\newcommand{\en}{\end{equation}}
\newcommand{\bea}{\begin{eqnarray}}
\newcommand{\ena}{\end{eqnarray}}
\newcommand{\beano}{\begin{eqnarray*}}
\newcommand{\enano}{\end{eqnarray*}}
\newcommand{\bee}{\begin{enumerate}}
\newcommand{\ene}{\end{enumerate}}
\newcommand{\mc}{\mathcal}
\newcommand{\U}{{\mc U}}
\newcommand{\V}{{\mc V}}
\newcommand{\Pc}{{\mc P}}
\newcommand{\E}{{\cal E}}
\newcommand{\G}{{\cal G}}
\newcommand{\Lc}{{\cal L}}
\newcommand{\C}{{\cal C}}
\newcommand{\T}{{\cal T}}
\newcommand{\1}{1 \!\! 1}
\newcommand{\Hil}{\mc H}
\newtheorem{thm}{Theorem}
\newtheorem{cor}[thm]{Corollary}
\newtheorem{lem}[thm]{Lemma}
\newtheorem{prop}[thm]{Proposition}
\newtheorem{defn}[thm]{Definition}
\newenvironment{proof}{\noindent {\bf Proof --}}{\hfill$\square$ \vspace{3mm}\endtrivlist}
\begin{document}

\thispagestyle{empty}

\vspace*{2cm}

\begin{center}
{\Large \bf On some properties of g-frames and g-coherent states}   \vspace{2cm}\\

{\large M.R. Abdollahpour}\\
  Department of mathematics Faculty of sciences, University of Mohaghegh Ardabili, Ardabil, Iran\\
e-mail: mrabdollahpour@yahoo.com \vspace{5mm}\\

{\large F. Bagarello}\\
  Dipartimento di Metodi e Modelli Matematici,
Facolt\`a di Ingegneria,\\ Universit\`a di Palermo, I-90128  Palermo, Italy\\
e-mail: bagarell@unipa.it

\end{center}

\vspace*{2cm}

\begin{abstract}
\noindent After a short review of some basic facts on g-frames, we analyze in  details the so-called (alternate) dual g-frames.  We end the paper by introducing what we call {\em g-coherent states} and studying their  properties.

\end{abstract}

\vspace{2cm}

{\bf AMS Subject classifications}:  42C15, 81R30

{\bf Key words}:  Frames, Coherent states.

{\bf Running title}:  g-frames and g-coherent states.

\vfill

%\pagenumbering{roman}

\newpage

\section{Introduction}

In a series of recent papers, \cite{sun,na,sun2} and references therein, a class of bounded operators, the so called {\em g-frames}, has been introduced and studied in some details. These operators allow the extension of the notion of {\em standard frame}, and this explains the "g" in their name which stands for {\em generalized}. Then one of us (MRA) has introduced a particular class of g-frames, focusing his attention mainly on some mathematical aspects of these operators, \cite{arf}. The second author (FB)  used g-frames to construct examples of physical systems which are of a certain interest in supersymmetric quantum mechanics, \cite{fb}. Here we begin our joint analysis on this subject.

 The paper is organized as follows: in the next section we introduce g-frames and review some of the standard results, adopting a terminology which is sometimes slightly different from the usual one since we believe it simplifies the notation and some proofs.  Then we show that the  dual of a given g-frame is not, in general, unique, but it becomes unique under suitable extra conditions on the original set. The last section is devoted to the introduction of what we call {\em g-coherent states}, which will appear to be essentially a two-dimensional version of the standard coherent states. In particular we will deduce several resolutions of the identity associated to different g-coherent states.

\section{Description of the system}

Let $\Hil$ be a given Hilbert space with scalar product $\left<.,.\right>_\Hil$ and related norm $\|.\|_\Hil$, and $\tilde\Hil$ a second Hilbert space with scalar product $\left<.,.\right>_{\tilde\Hil}$ and  norm $\|.\|_{\tilde\Hil}$. Let now $J$ be a set of indexes which labels a  sequence of Hilbert spaces $\{\tilde\Hil_j\subseteq\tilde\Hil,\,j\in J\}$. We call $\left<.,.\right>_{\tilde\Hil_j}$ and  $\|.\|_{\tilde\Hil_j}$ their scalar products and norms.

\begin{defn}\label{def1}
A set of bounded operators $\Lc=\{\Lambda_j\in B(\Hil,\tilde\Hil_j), \,j\in J\}$ is an $(A,B)$ g-frame of $(\Hil,\{\tilde\Hil_j\})$, briefly a g-frame, if there exist two positive numbers $A$ and $B$, with $0<A\leq B<\infty$, such that, for all $f\in\Hil$,
\be
 A\,\|f\|_\Hil^2 \leq \sum_{j\in J}\|\Lambda_jf\|_{\tilde\Hil_j}^2 \leq B\,\|f\|_\Hil^2
\label{21}
\en

\end{defn}

In particular a g-frame is called {\em tight} if $A=B$ and it is called a {\em Parseval g-frame} if $A=B=1$. Because of the several different Hilbert spaces appearing in this paper, we will use different symbols to indicate the different norms and scalar products which appear in these different spaces. Incidentally, notice that standard frames are recovered when $\tilde\Hil_j=\tilde\Hil=\Bbb{C}$ for all $j$ and $\Lambda_j=<\varphi_j,.>$, with $\varphi_j$ belonging to a certain $(A,B)$-frame of $\Hil$. Notice also that, for all $\tilde f\in\tilde\Hil_j\subseteq\tilde\Hil$, we have $\|\tilde f\|_{\tilde\Hil_j}=\|\tilde f\|_{\tilde\Hil}$.

Since the adjoint of $\Lambda_j$, $\Lambda_j^\dagger$, is bounded from $\tilde\Hil_j$ into $\Hil$, it follows that $\Lambda_j^\dagger\Lambda_j\in B(\Hil)$ for all $j\in J$. Now, let us define yet another Hilbert space, $\hat\Hil$, which looks like an  $l^2(\Bbb{N})$ space but in which the sequences of complex numbers
are replaced by sequences of elements of the different $\tilde\Hil_j$'s. More explicitly, let
\be
\hat\Hil:=\left\{\underline{f}:=\{\tilde f_j\in\tilde\Hil_j\}_{j\in J}, \mbox{ such that }\|\underline{f}\|_{\hat\Hil}^2:=
\sum_{j\in J}\|\tilde f_j\|_{\tilde\Hil_j}^2<\infty\right\},
\label{22}\en
which we endow with the following scalar product:
\be
<\underline{f},\underline{g}>_{\hat\Hil}:=\sum_{j\in J}<\tilde f_j,\tilde g_j>_{\tilde\Hil_j}.
\label{23}\en
Now we can associate to the set $\Lc$  a bounded operator $T_\Lc:\Hil \rightarrow \hat\Hil$,  the {\em analysis operator},  defined as follows:
\be
\forall f\in \Hil \hspace{1cm}  (T_\Lc\,f)_j = \Lambda_j\, f, \qquad\Longrightarrow\qquad T_\Lc\,f=\{\Lambda_j\, f\}_{j\in J}.
\label{24}
\en
The vector $(T_\Lc\,f)_j$ belongs to $\tilde\Hil_j$ for each $j\in J$, while $T_\Lc\,f$ belongs to $\hat\Hil$. As for standard frames, we find that $\|T_\Lc\|_{B(\Hil,\hat\Hil)} \leq \sqrt{B}$. The
adjoint of  $T_\Lc$, the so-called {\em synthesis operator} $T_\Lc^\dagger$,  maps $\hat\Hil$ into $\Hil$, and acts on a generic vector $\underline{f}$ of $\hat\Hil$ as follows:
 \be
   T_\Lc^\dagger \underline{f} = \sum_{j\in J}\Lambda_j^\dagger \,f_j.
\label{25}
\en
Of course $\|T_\Lc^\dagger\|_{B(\hat\Hil,\Hil)} \leq \sqrt{B}$. Using these two operators we can define the g-frame operator $S_\Lc=T_\Lc^\dagger\,T_\Lc$ which acts on a generic element $f\in\Hil$ as
\be
S_\Lc\,f=T_\Lc^\dagger\,T_\Lc f=\sum_{j\in J}\Lambda_j^\dagger\,\Lambda_j\,f, \quad\mbox{ so that }\quad S_\Lc=T_\Lc^\dagger\,T_\Lc=\sum_{j\in J}\Lambda_j^\dagger\,\Lambda_j.
\label{26}\en
Obviously $S_\Lc:\Hil\rightarrow\Hil$ and we have $\|S_\Lc\|_{B(\Hil)}\leq B$. Moreover $S_\Lc$ is self-adjoint and strictly positive. Also, using (\ref{26}), we can restate Definition \ref{def1} in the following equivalent form:

{\em $\Lc$ is an (A,B) {g-frame} of $(\Hil,\{\tilde\Hil_j\})$ if there exist
two positive numbers $A$ and $B$, $0<A\leq B<\infty$, such that the inequalities
\be
A\1_\Hil \leq S_\Lc \leq B \1_\Hil
\label{27}
\en
hold in the sense of the operators.}

 Hence we find that $A\leq\|S_\Lc\|_{B(\Hil)}\leq B$, so that the norm of $S_\Lc$ is also bounded from below. $S_\Lc^{-1}$ clearly exists in $\Hil$, and we find that $B^{-1}\1_\Hil \leq S_\Lc^{-1} \leq A^{-1} \1_\Hil$. $S_\Lc^{-1}$ and $S_\Lc$ can be used together now to get two {\em resolutions of the identity} in $\Hil$. Indeed, defining a new operator $\tilde\Lambda_j:=\Lambda_jS_\Lc^{-1}$, which maps again $\Hil$ into $\tilde\Hil_j$, and its adjoint $\tilde\Lambda_j^\dagger:=S_\Lc^{-1}\Lambda_j^\dagger:\tilde\Hil_j\rightarrow\Hil$, we find that
\be
f=S_\Lc\,S_\Lc^{-1}f=\sum_{j\in J}\Lambda_j^\dagger\,\tilde\Lambda_j \,f
\quad\mbox{ or }\quad f=S_\Lc^{-1}S_\Lc\,f=\sum_{j\in J}\tilde\Lambda_j^\dagger\,\Lambda_j \,f.
\label{28}\en
In an operatorial form we can rewrite (\ref{28}) as
\be
\sum_{j\in J}\Lambda_j^\dagger\,\tilde\Lambda_j =
\sum_{j\in J}\tilde\Lambda_j^\dagger\,\Lambda_j =\1_\Hil,
\label{29}\en
which are the  resolutions of the identity we were looking for, which are related since the second is just the adjoint of the first one. The {\em canonical dual set} of $\Lc$, $\tilde\Lc=\{\tilde\Lambda_j=\Lambda_j\,S_\Lc^{-1}\in B(\Hil,\tilde \Hil_j),\,j\in J\}$, is a g-frame by itself, and in particular is a $\left(\frac{1}{B},\frac{1}{A}\right)$ g-frame, whose canonical dual, $\tilde{\tilde\Lc}$, coincides with $\Lc$ itself. It is sometimes useful to rewrite (\ref{29}) in terms of the synthesis and analysis operators of $\Lc$ and $\tilde\Lc$:
\be
T_{\Lc}^\dagger\,T_{\tilde\Lc}=T_{\tilde\Lc}^\dagger T_{\Lc}=\1_\Hil.
\label{29bis}\en

As for standard frames we can check that the set $Q:=\{Q_j:=\Lambda_j\,S^{-1/2}\in B(\Hil,\tilde\Hil_j),\,j\in J\}$ is a Parseval g-frame, for any starting g-frame $\Lc=\{\Lambda_j\in B(\Hil,\tilde \Hil_j),\,j\in J\}$.

\vspace{2mm}

In the literature, \cite{sun}, we can find the following useful definitions on g-frames: let $\Hil$, $\tilde\Hil_j$ and $\tilde\Hil$ be as before. Then:

\begin{defn}\label{def2b}
A set of operators $\Lc=\{\Lambda_j\in B(\Hil,\tilde\Hil_j), \,j\in J\}$ is g-orthonormal (g-on) in $(\Hil,\{\tilde\Hil_j\})$ if, for all $\tilde f_j\in\tilde\Hil_j$, $\tilde f_k\in\tilde\Hil_k$,
\be
\left<\Lambda_j^\dagger\tilde f_j,\Lambda_k^\dagger\tilde f_k\right>_\Hil=\delta_{j,k}\, \left<\tilde f_j,\tilde f_k\right>_{\tilde\Hil}.
\label{210}\en
A g-on set $\Lc$ is a g-on basis  in $(\Hil,\{\tilde\Hil_j\})$ if, for all $f\in\Hil$, the Parseval equality
\be
\sum_{j\in J}\|\Lambda_j\,f\|_{\tilde\Hil_j}^2=\|f\|^2_\Hil
\label{211}\en

holds.
\end{defn}

This last equality implies that for a g-on basis $\Lc$ the frame operator $S_\Lc$ is just the identity in $\Hil$: $S_\Lc=\sum_{j\in J} \Lambda_j^\dagger\,\Lambda_j=T_\Lc^\dagger T_\Lc=\1_\Hil$. Of course, due to (\ref{211}), if $\Lambda_jf=0$ for all $j\in J$, then $f=0$. This is  called {\em g-completeness} of the set $\Lc$. It is  interesting to notice that, if $\Lc$ is a g-on basis, we also find
\be
T_\Lc\, T_\Lc^\dagger=\1_{\hat\Hil}.
\label{211bis}\en
Indeed we have, taking $\underline f$ and $\underline g$ in $\hat\Hil$, $$\left<\underline f, T_\Lc\, T_\Lc^\dagger\,\underline g\right>_{\hat\Hil}=\left<T_\Lc^\dagger\,\underline f,  T_\Lc^\dagger\,\underline g\right>_{\hat\Hil}=\sum_{j,k}\left<\Lambda_j^\dagger\,\tilde f_j,  \Lambda_k^\dagger\,\tilde f_k\right>_{\hat\Hil}=
\sum_{j,k}\delta_{j,k}\left<\tilde f_j, \tilde f_k\right>_{\tilde\Hil}=\left<\underline f, \underline g\right>_{\hat\Hil}.$$

Now we define a g-Riesz basis in the following way:

\begin{defn}\label{def2c}

A set of operators $\Lc=\{\Lambda_j\in B(\Hil,\tilde\Hil_j), \,j\in J\}$ is a g-Riesz basis of $(\Hil,\{\tilde\Hil_j\})$ if there exists a bounded operator $X\in B(\Hil)$ with bounded inverse $X^{-1}\in B(\Hil)$ and a g-on basis $\Theta=\{\theta_j\in B(\Hil,\tilde\Hil_j), \,j\in J\}$ in $(\Hil,\{\tilde\Hil_j\})$ such that
\be
\Lambda_j=\theta_j\,X
\label{212}\en
for all $j\in J$.
\end{defn}

In \cite{sun} it is proven that this definition is equivalent to the following one, which will be used in the proof of Proposition 13 below:

{\em
A set of operators $\Lc=\{\Lambda_j\in B(\Hil,\tilde\Hil_j), \,j\in J\}$ is a g-Riesz basis of $(\Hil,\{\tilde\Hil_j\})$ if is g-complete and if there are two positive constants $A, B>0$ such that, for all finite subset $I\subseteq J$ and for all $g_j\in\tilde\Hil_j$, we have
$$
A\sum_{j\in I}\|g_j\|_{\tilde\Hil_j}^2\leq \left\|\sum_{j\in I}\Lambda_j^\dagger g_j\right\|_\Hil\leq B\sum_{j\in I}\|g_j\|_{\tilde\Hil_j}^2
$$
}

Using Definition \ref{def2c} it is very easy to check that any g-Riesz basis in $(\Hil,\{\tilde\Hil_j\})$ is a g-frame. Indeed, using (\ref{212}) in the computation of the operator $S_\Lc:=\sum_{j\in J}\Lambda_j^\dagger\,\Lambda_j$ we find
$$
S_\Lc:=\sum_{j\in J}\Lambda_j^\dagger\,\Lambda_j=X^\dagger\,\sum_{j\in J} \theta_j^\dagger\,\theta_j\,X=X^\dagger\,X,
$$
since $S_\Theta:=\sum_{j\in J} \theta_j^\dagger\,\theta_j=\1_\Hil$,  $\Theta$ being a g-on basis. Notice that we have used here also the continuity of $X$ and $X^\dagger$. From this equality we deduce that
\be
\|X^{-1}\|^{-2}\1_\Hil\leq S_\Lc \leq \|X\|^{2}\1_\Hil,
\label{213}\en
so that inequality (\ref{27}) is satisfied with $A=\|X^{-1}\|^{-2}$ and $B=\|X\|^{2}$. Hence $\Lc$ is a g-frame, as stated. Another relevant definition for us is that of biorthogonal g-frames:

\begin{defn}
Let $\Lc=\{\Lambda_j\in B(\Hil,\tilde\Hil_j), \,j\in J\}$ and $\G=\{\Gamma_j\in B(\Hil,\tilde\Hil_j), \,j\in J\}$ be two g-frames of $(\Hil,\{\tilde\Hil_j\})$. We say that they are biorthogonal if
\be
\left<\Lambda_j^\dagger\tilde f_j,\Gamma_k^\dagger\tilde f_k\right>_\Hil=\delta_{j,k}\, \left<\tilde f_j,\tilde f_k\right>_{\tilde\Hil}
\label{214}\en
for all $\tilde f_j\in\tilde\Hil_j$, $\tilde f_k\in\tilde\Hil_k$.
\end{defn}
Then we have
\begin{thm}
Let $\Lc=\{\Lambda_j\in B(\Hil,\tilde\Hil_j), \,j\in J\}$  be a g-Riesz basis of $(\Hil,\{\tilde\Hil_j\})$ and $\tilde\Lc$ his canonical dual set. Then $\tilde\Lc$ is also a g-Riesz basis and $\Lc$ and $\tilde\Lc$ are biorthogonal.

\end{thm}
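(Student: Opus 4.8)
The plan is to use the structural description of a g-Riesz basis from Definition \ref{def2c}. Write $\Lambda_j=\theta_j\,X$ for all $j\in J$, with $\Theta=\{\theta_j\}$ a g-on basis of $(\Hil,\{\tilde\Hil_j\})$ and $X\in B(\Hil)$ having bounded inverse $X^{-1}\in B(\Hil)$. As computed just before (\ref{213}), this gives $S_\Lc=X^\dagger\,X$, hence $S_\Lc^{-1}=X^{-1}\,(X^\dagger)^{-1}=X^{-1}\,(X^{-1})^\dagger$, using the standard identity $(X^\dagger)^{-1}=(X^{-1})^\dagger$ valid for any bounded operator with bounded inverse.

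First I would identify the canonical dual set explicitly:
\be
\tilde\Lambda_j=\Lambda_j\,S_\Lc^{-1}=\theta_j\,X\,X^{-1}\,(X^\dagger)^{-1}=\theta_j\,Y,\qquad Y:=(X^\dagger)^{-1}.
\en
Since $X$ and $X^\dagger$ are bounded with bounded inverses, $Y\in B(\Hil)$ and $Y$ has bounded inverse $Y^{-1}=X^\dagger\in B(\Hil)$. Thus $\tilde\Lambda_j=\theta_j\,Y$ for all $j$ with $\Theta$ a g-on basis, which is exactly the form required by Definition \ref{def2c}; therefore $\tilde\Lc$ is a g-Riesz basis of $(\Hil,\{\tilde\Hil_j\})$.

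Next I would check biorthogonality. Taking adjoints, $\Lambda_j^\dagger=X^\dagger\,\theta_j^\dagger$ and $\tilde\Lambda_k^\dagger=Y^\dagger\,\theta_k^\dagger=X^{-1}\,\theta_k^\dagger$ (using $Y^\dagger=((X^\dagger)^{-1})^\dagger=X^{-1}$; equivalently $\tilde\Lambda_k^\dagger=S_\Lc^{-1}\Lambda_k^\dagger=X^{-1}(X^\dagger)^{-1}X^\dagger\theta_k^\dagger=X^{-1}\theta_k^\dagger$). Hence, for $\tilde f_j\in\tilde\Hil_j$ and $\tilde f_k\in\tilde\Hil_k$,
\be
\left<\Lambda_j^\dagger\tilde f_j,\tilde\Lambda_k^\dagger\tilde f_k\right>_\Hil=\left<X^\dagger\theta_j^\dagger\tilde f_j,\,X^{-1}\theta_k^\dagger\tilde f_k\right>_\Hil=\left<\theta_j^\dagger\tilde f_j,\,\theta_k^\dagger\tilde f_k\right>_\Hil,
\en
moving $X^\dagger$ across the scalar product and using $X\,X^{-1}=\1_\Hil$. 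By the g-orthonormality (\ref{210}) of $\Theta$ the last expression equals $\delta_{j,k}\left<\tilde f_j,\tilde f_k\right>_{\tilde\Hil}$, which is precisely (\ref{214}). So $\Lc$ and $\tilde\Lc$ are biorthogonal.

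No serious obstacle is expected: the argument is essentially bookkeeping with the bounded invertible operator $X$, its adjoint and their inverses. The only points needing a little attention are the formula $S_\Lc^{-1}=X^{-1}(X^\dagger)^{-1}$ together with the elementary facts $(X^\dagger)^{-1}=(X^{-1})^\dagger$ and $((X^\dagger)^{-1})^\dagger=X^{-1}$, and keeping track of which factor is transferred across the inner product.
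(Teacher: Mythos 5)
Your argument is correct and follows exactly the route the paper takes: writing $\Lambda_j=\theta_j X$, deducing $S_\Lc=X^\dagger X$ and hence $\tilde\Lambda_j=\theta_j(X^\dagger)^{-1}$, and then verifying (\ref{214}) directly against the g-orthonormality of $\Theta$. You have simply supplied the bookkeeping that the paper leaves as "can be explicitly checked," and all of it is accurate.
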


To prove this theorem it is enough to use Definition \ref{def2c} and to notice that, since $\Lambda_j=\theta_j X$ for some $X\in B(\Hil)$ with bounded inverse and for some g-on basis $\Theta=\{\theta_j\in B(\Hil,\tilde\Hil_j), \,j\in J\}$, then  we get $\tilde\Lambda_j=\theta_j (X^\dagger)^{-1}$, which implies our first statement. The biorthogonality of $\Lc$ and $\tilde\Lc$ can be  explicitly checked.

We see that most of the results obtained in this section are simple extensions of well known facts in the theory of standard frames and Riesz bases, \cite{chri}. Another result which can be extended to the present settings is the following:

\begin{thm}
Let $\Lc=\{\Lambda_j\in B(\Hil,\tilde\Hil_j), \,j\in J\}$  be a g-Riesz basis of $(\Hil,\{\tilde\Hil_j\})$. Then the set $\E:=\{E_j=\Lambda_j\,S_\Lc^{-1/2}\in B(\Hil,\tilde\Hil_j), \,j\in J\}$ is a g-on basis in $(\Hil,\{\tilde\Hil_j\})$.
\end{thm}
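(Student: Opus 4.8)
The plan is to check the two conditions in Definition~\ref{def2b} directly for $\E$: first the Parseval equality~(\ref{211}), and then the g-orthonormality relations~(\ref{210}). Establishing both gives, by definition, that $\E$ is a g-on basis.

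For the Parseval equality there is essentially nothing to do: $\E$ is precisely the set $Q$ introduced above, built out of the g-frame $\Lc$ with $S=S_\Lc$, so it is a Parseval g-frame and hence $\sum_{j\in J}\|E_jf\|_{\tilde\Hil_j}^2=\|f\|_\Hil^2$ for all $f\in\Hil$. Equivalently, one can just compute $S_\E=\sum_{j\in J}E_j^\dagger E_j=S_\Lc^{-1/2}\big(\sum_{j\in J}\Lambda_j^\dagger\Lambda_j\big)S_\Lc^{-1/2}=S_\Lc^{-1/2}S_\Lc S_\Lc^{-1/2}=\1_\Hil$, using the continuity of $S_\Lc^{-1/2}$ to pull it through the convergent sum.

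The core of the argument is the g-orthonormality, and here I would use the defining structure of a g-Riesz basis. By Definition~\ref{def2c} we may write $\Lambda_j=\theta_j X$ for all $j\in J$, where $\Theta=\{\theta_j\}$ is a g-on basis and $X\in B(\Hil)$ has bounded inverse. As already computed in the excerpt this gives $S_\Lc=X^\dagger X$, which is strictly positive and invertible, so $S_\Lc^{-1/2}=(X^\dagger X)^{-1/2}\in B(\Hil)$ is well defined, and
\be
E_j=\Lambda_j S_\Lc^{-1/2}=\theta_j\,U,\qquad U:=X\,(X^\dagger X)^{-1/2}.
\en
Now $U$ is unitary: on one hand $U^\dagger U=(X^\dagger X)^{-1/2}(X^\dagger X)(X^\dagger X)^{-1/2}=\1_\Hil$, and on the other hand, since $X$ is invertible, $UU^\dagger=X(X^\dagger X)^{-1}X^\dagger=X\,X^{-1}(X^\dagger)^{-1}X^\dagger=\1_\Hil$ — this is nothing but the polar decomposition $X=U\,(X^\dagger X)^{1/2}$ with invertible positive part. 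Consequently, for all $\tilde f_j\in\tilde\Hil_j$ and $\tilde f_k\in\tilde\Hil_k$,
\be
\left<E_j^\dagger\tilde f_j,E_k^\dagger\tilde f_k\right>_\Hil=\left<U^\dagger\theta_j^\dagger\tilde f_j,\,U^\dagger\theta_k^\dagger\tilde f_k\right>_\Hil=\left<\theta_j^\dagger\tilde f_j,\,\theta_k^\dagger\tilde f_k\right>_\Hil=\delta_{j,k}\left<\tilde f_j,\tilde f_k\right>_{\tilde\Hil},
\label{thm-gon}
\en
where the middle equality uses $UU^\dagger=\1_\Hil$ and the last one the g-orthonormality of $\Theta$. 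This is exactly~(\ref{210}), so $\E$ is g-on; together with the Parseval equality already established, $\E$ is a g-on basis.

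The only point that needs a little care is the unitarity of $U=X(X^\dagger X)^{-1/2}$: it is always an isometry, but it is the invertibility of $X$ built into the notion of g-Riesz basis that upgrades this to $UU^\dagger=\1_\Hil$, which is precisely what~(\ref{thm-gon}) requires. Everything else reduces to standard bounded-operator manipulations: functional calculus for the strictly positive invertible operator $X^\dagger X$, and the continuity of $X$ and $X^\dagger$ used to move them in and out of the convergent series $\sum_{j\in J}\Lambda_j^\dagger\Lambda_j$ and $\sum_{j\in J}E_j^\dagger E_j$.
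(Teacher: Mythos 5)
Your proof is correct, but the g-orthonormality step follows a genuinely different route from the paper's. The paper verifies (\ref{210}) by inserting $S_\Lc^{-1}$ between the two vectors, recognizing $S_\Lc^{-1}\Lambda_j^\dagger=\tilde\Lambda_j^\dagger$, and then invoking Theorem 5 (the biorthogonality of a g-Riesz basis $\Lc$ with its canonical dual $\tilde\Lc$); for the Parseval equality it uses, as you do, the factorization $\Lambda_j=\theta_j X$ and $S_\Lc=X^\dagger X$. You instead extract the polar decomposition $X=U\,(X^\dagger X)^{1/2}$ and observe that $E_j=\theta_j U$ with $U=X(X^\dagger X)^{-1/2}$ unitary — the invertibility of $X$ being exactly what upgrades the isometry $U^\dagger U=\1_\Hil$ to $UU^\dagger=\1_\Hil$, which is the identity your computation of $\left<E_j^\dagger\tilde f_j,E_k^\dagger\tilde f_k\right>_\Hil$ actually needs. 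Your approach is more self-contained (it does not rely on Theorem 5) and structurally more informative: it exhibits $\E$ as a g-on basis composed with a unitary, so both the orthonormality relations and the Parseval equality drop out of the single identity $E_j=\theta_j U$ (indeed $\sum_j\|E_jf\|^2_{\tilde\Hil_j}=\|Uf\|^2_\Hil=\|f\|^2_\Hil$ would have done for the second half as well, though your two alternative verifications via the Parseval g-frame $Q$ and via $S_\E=S_\Lc^{-1/2}S_\Lc S_\Lc^{-1/2}=\1_\Hil$ are equally fine). What the paper's argument buys in exchange is brevity once Theorem 5 is in place, and it keeps the canonical dual $\tilde\Lc$ in view, which is the object the surrounding sections care about.
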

\begin{proof}
Indeed we have, using Theorem 5,
$$
\left<E_j^\dagger\tilde f_j,E_k^\dagger\tilde f_k\right>_\Hil=\left<S_\Lc^{-1/2}\,\Lambda_j^\dagger\tilde f_j,S_\Lc^{-1/2}\,\Lambda_k^\dagger\tilde f_k\right>_\Hil=\left<S_\Lc^{-1}\,\Lambda_j^\dagger\tilde f_j,\Lambda_k^\dagger\tilde f_k\right>_\Hil=
$$
$$
=\left<\tilde\Lambda_j^\dagger\tilde f_j,\Lambda_k^\dagger\tilde f_k\right>_\Hil=\delta_{j,k}\left<\tilde f_j,\tilde f_k\right>_{\tilde\Hil},
$$
for all $\tilde f_j\in\tilde\Hil_j$, $\tilde f_k\in\tilde\Hil_k$. Moreover, since $\Lambda_j=\theta_j X$ for some $X\in B(\Hil)$ with bounded inverse and for some g-on basis $\Theta=\{\theta_j\in B(\Hil,\tilde\Hil_j), \,j\in J\}$, we also have
$$
\sum_{j\in J}\|E_jf\|_{\tilde\Hil_j}^2=\sum_{j\in J}\|\theta_jXS_\Lc^{-1/2}f\|_{\tilde\Hil_j}^2=\|XS_\Lc^{-1/2}f\|_{\Hil}^2=\left<X^\dagger\,XS_\Lc^{-1/2} f,S_\Lc^{-1/2} f\right>_\Hil=\|f\|_\Hil^2,
$$
for all $f\in\Hil$. In this derivation we have used the equality $S_\Lc=X^\dagger X$.
\end{proof}

We end this section recalling how, following \cite{sun}, g-frames (resp. g-Riesz bases or g-on bases) can be constructed starting from ordinary frames (resp. Riesz bases or on bases) in $\Hil$. The starting point is the usual set of Hilbert spaces, $\Hil$, $\tilde\Hil_j\subseteq\tilde\Hil$, $j\in J$, and an on basis of $\tilde\Hil_j$: $\E_j:=\{e_k^{(j)},\,k\in K_j\subseteq\Bbb{Z}\}$. Here $K_j$ is a certain set of indexes labeling $\tilde\Hil_j$. Hence $\left<e_k^{(j)},e_l^{(j)}\right>_{\tilde\Hil_j}=\delta_{k,l}$ for all $k,l\in K_j$, and for all fixed $j\in J$. Let further $\Lc=\{\Lambda_j\in B(\Hil,\tilde\Hil_j), \,j\in J\}$ be a given set of bounded operators. Then we construct a new set of vectors in $\Hil$ starting from $\E_j$ and $\Lc$ in the following way:
\be
\U:=\{u_k^{(j)}:=\Lambda_j^\dagger\,e_k^{(j)}\in \Hil,\, k\in K_j,\, j\in J\}
\label{215}\en
It is easy to see that: if $\U$ is an $(A,B)-$frame of $\Hil$  then $\Lc$ is an $(A,B)$ g-frame of $(\Hil,\{\tilde\Hil_j\})$. If $\U$ is an Riesz basis of $\Hil$  then $\Lc$ is  g-Riesz basis of $(\Hil,\{\tilde\Hil_j\})$. If $\U$ is an on basis of $\Hil$  then $\Lc$ is  g-on basis of $(\Hil,\{\tilde\Hil_j\})$.

The proof of these statements, originally given in \cite{sun}, immediately follows from our definitions.

\section{Dual of g-frames}

For ordinary frames it is known that the dual set of a given frame is not necessarily its "canonically conjugate" dual, except when some extra requirement is satisfied, \cite{han}. In this section we will show that similar results can be extended to g-frames. We start with the following
\begin{defn} Let $\Lc=\{\Lambda_j\in B(\Hil,\tilde\Hil_j), \,j\in J\}$ and $\Theta=\{\theta_j\in B(\Hil,\tilde\Hil_j), \,j\in J\}$ be
two $g$-frames of $(\Hil,\{\tilde\Hil_j\})$ such that $$f=\sum_{i\in
J}\theta_{i}^\dagger\Lambda_{i}f,\quad f\in \Hil.$$ Then
$\Theta$ is called an alternate dual of
$\Lc$.

Moreover, the set $\Lc$ is a g-Bessel family of $(\Hil,\{\tilde\Hil_j\})$ if a positive constant $B$ exists for which
$$
\sum_{j\in J}\|\Lambda_jf\|_{\tilde\Hil_j}^2 \leq B\,\|f\|_\Hil^2,
$$
for all $f\in\Hil$.
\end{defn}

In \cite{arf} it has been proved that if
 $\Theta$ is an
alternate dual of $\Lc$, then
$\Lc$ is an alternate dual of
$\Theta$. In term of synthesis and analysis operators this means that $T_\Theta^\dagger T_\Lc=T_\Lc^\dagger T_\Theta=\1_\Hil$.

We have, \cite{na}:
\begin{lem}\label{dual}
Let $\Lc=\{\Lambda_j\in B(\Hil,\tilde\Hil_j), \,j\in J\}$ and $\Theta=\{\theta_j\in B(\Hil,\tilde\Hil_j), \,j\in J\}$ be
$g$-Bessel families of $(\Hil,\{\tilde\Hil_j\})$ such that $$f=\sum_{i\in
J}\Lambda_{i}^\dagger\theta_{i}f,\quad f\in \Hil.$$ Then $\Lc$ and $\Theta$ are $g$-frames.
\end{lem}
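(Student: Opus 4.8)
The plan is to derive the two \emph{lower} g-frame bounds directly from the reconstruction identity, the upper bounds being already supplied by the g-Bessel hypothesis. First I would record that, since $\Lc$ and $\Theta$ are g-Bessel families with constants $B_\Lc$ and $B_\Theta$, for every $f\in\Hil$ the sequences $T_\Lc f=\{\Lambda_j f\}_{j\in J}$ and $T_\Theta f=\{\theta_j f\}_{j\in J}$ belong to $\hat\Hil$, with $\|T_\Lc f\|_{\hat\Hil}\le\sqrt{B_\Lc}\,\|f\|_\Hil$ and $\|T_\Theta f\|_{\hat\Hil}\le\sqrt{B_\Theta}\,\|f\|_\Hil$; hence the analysis operators $T_\Lc,T_\Theta:\Hil\to\hat\Hil$ and their adjoints $T_\Lc^\dagger,T_\Theta^\dagger:\hat\Hil\to\Hil$ are bounded, and the assumption $f=\sum_{j}\Lambda_j^\dagger\theta_j f$ simply says $T_\Lc^\dagger T_\Theta f=f$ for all $f$, i.e. $T_\Lc^\dagger T_\Theta=\1_\Hil$, and therefore also $T_\Theta^\dagger T_\Lc=\1_\Hil$ by taking adjoints.

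Next I would run the usual Cauchy--Schwarz argument inside $\hat\Hil$. For $f\in\Hil$,
$$
\|f\|_\Hil^2=\langle T_\Lc^\dagger T_\Theta f,f\rangle_\Hil=\langle T_\Theta f,T_\Lc f\rangle_{\hat\Hil}=\sum_{j\in J}\langle\theta_j f,\Lambda_j f\rangle_{\tilde\Hil_j},
$$
the scalar series being absolutely convergent precisely because $T_\Theta f$ and $T_\Lc f$ lie in $\hat\Hil$. Applying Cauchy--Schwarz in $\hat\Hil$ and then the g-Bessel bound for $\Theta$ gives
$$
\|f\|_\Hil^2\le\|T_\Theta f\|_{\hat\Hil}\,\|T_\Lc f\|_{\hat\Hil}\le\sqrt{B_\Theta}\,\|f\|_\Hil\Big(\sum_{j\in J}\|\Lambda_j f\|_{\tilde\Hil_j}^2\Big)^{1/2}.
$$
Dividing by $\|f\|_\Hil$ (the case $f=0$ being trivial) and squaring yields $\frac{1}{B_\Theta}\|f\|_\Hil^2\le\sum_{j\in J}\|\Lambda_j f\|_{\tilde\Hil_j}^2$, the missing lower inequality; together with the g-Bessel upper bound $B_\Lc$ this shows that $\Lc$ is a $\big(\frac{1}{B_\Theta},B_\Lc\big)$ g-frame. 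Interchanging the roles of $\Lc$ and $\Theta$ in the last estimate — bounding $\|T_\Lc f\|_{\hat\Hil}\le\sqrt{B_\Lc}\,\|f\|_\Hil$ instead — gives $\frac{1}{B_\Lc}\|f\|_\Hil^2\le\sum_{j\in J}\|\theta_j f\|_{\tilde\Hil_j}^2$, so that $\Theta$ is a $\big(\frac{1}{B_\Lc},B_\Theta\big)$ g-frame.

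I do not expect a genuine obstacle: the only point deserving care is the justification that the reconstruction sum $\sum_j\Lambda_j^\dagger\theta_j f$ may legitimately be regrouped as the inner product $\langle T_\Theta f,T_\Lc f\rangle_{\hat\Hil}$ and that the resulting scalar series converges absolutely — which is exactly what membership of $T_\Lc f$ and $T_\Theta f$ in $\hat\Hil$, i.e. the g-Bessel hypothesis, guarantees. Everything else is the Cauchy--Schwarz computation familiar from ordinary frame theory \cite{chri}, transcribed into the $\hat\Hil$ setting set up in Section~II.
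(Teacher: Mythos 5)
Your proof is correct. Note that the paper itself offers no proof of this lemma --- it is quoted from reference \cite{na} --- so there is nothing internal to compare against; your argument (rewrite the hypothesis as $T_\Lc^\dagger T_\Theta=\1_\Hil$, then apply Cauchy--Schwarz in $\hat\Hil$ together with the Bessel bounds to extract the lower frame bounds $1/B_\Theta$ for $\Lc$ and $1/B_\Lc$ for $\Theta$) is exactly the standard one used in that reference and in the classical frame-theoretic analogue \cite{chri}, and all the convergence points you flag are indeed covered by the g-Bessel hypothesis.
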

For two g-frames of of $(\Hil,\{\tilde\Hil_j\})$, $\Lc=\{\Lambda_j\in B(\Hil,\tilde\Hil_j), \,j\in J\}$ and $\Theta=\{\theta_j\in B(\Hil,\tilde\Hil_j), \,j\in J\}$, we can prove the following result:
\begin{prop}
Let $\Theta$ be an alternate dual $g$-frame of $\Lc$
  and let $\{e_{i,k}: k\in K_i\}$ be an orthonormal basis for $\tilde\Hil_i.$ Calling $u_{i,k}=\Lambda_{i}^\dagger e_{i,k}$ and $v_{i,k}=\theta_{i}^\dagger e_{i,k}$, $k\in K_i$, $i\in J$, then $\{v_{i,k}\}_{i\in J, k\in K_i}$ is a dual frame of
 $\{u_{i,k}\}_{i\in J, k\in K_i}.$
\end{prop}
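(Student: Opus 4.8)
The plan is to unpack the definitions and show that the two frame conditions for ordinary frames---the Bessel-type bounds and the reconstruction formula---follow directly from the corresponding g-frame statements for $\Lc$ and $\Theta$. First I would recall that, for each fixed $i$ and any $f\in\Hil$, expanding $\Lambda_i f\in\tilde\Hil_i$ in the orthonormal basis $\{e_{i,k}\}_{k\in K_i}$ gives $\|\Lambda_i f\|_{\tilde\Hil_i}^2=\sum_{k\in K_i}|\langle\Lambda_i f,e_{i,k}\rangle_{\tilde\Hil_i}|^2=\sum_{k\in K_i}|\langle f,u_{i,k}\rangle_\Hil|^2$, using $\langle\Lambda_i f,e_{i,k}\rangle_{\tilde\Hil_i}=\langle f,\Lambda_i^\dagger e_{i,k}\rangle_\Hil=\langle f,u_{i,k}\rangle_\Hil$. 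Summing over $i\in J$ turns the g-frame inequalities $A\|f\|_\Hil^2\le\sum_{i\in J}\|\Lambda_i f\|_{\tilde\Hil_i}^2\le B\|f\|_\Hil^2$ into $A\|f\|_\Hil^2\le\sum_{i\in J,k\in K_i}|\langle f,u_{i,k}\rangle_\Hil|^2\le B\|f\|_\Hil^2$, which is exactly the statement that $\{u_{i,k}\}$ is an $(A,B)$-frame of $\Hil$. The same computation applied to $\Theta$ shows $\{v_{i,k}\}$ is a frame.

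Next I would verify the duality relation. The defining property of an alternate dual g-frame is $f=\sum_{i\in J}\theta_i^\dagger\Lambda_i f$ for all $f\in\Hil$. For a fixed $i$, I would show $\theta_i^\dagger\Lambda_i f=\sum_{k\in K_i}\langle f,u_{i,k}\rangle_\Hil\,v_{i,k}$: writing $\Lambda_i f=\sum_{k\in K_i}\langle\Lambda_i f,e_{i,k}\rangle_{\tilde\Hil_i}e_{i,k}=\sum_{k\in K_i}\langle f,u_{i,k}\rangle_\Hil\,e_{i,k}$ and applying the bounded operator $\theta_i^\dagger$ term by term (justified by boundedness and convergence in $\tilde\Hil_i$) yields $\theta_i^\dagger\Lambda_i f=\sum_{k\in K_i}\langle f,u_{i,k}\rangle_\Hil\,\theta_i^\dagger e_{i,k}=\sum_{k\in K_i}\langle f,u_{i,k}\rangle_\Hil\,v_{i,k}$. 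Substituting into the reconstruction formula gives $f=\sum_{i\in J}\sum_{k\in K_i}\langle f,u_{i,k}\rangle_\Hil\,v_{i,k}$, which is precisely the statement that $\{v_{i,k}\}$ is a dual frame of $\{u_{i,k}\}$.

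The main obstacle, such as it is, will be bookkeeping about convergence and interchange of sums rather than anything deep: one needs the series $\sum_{k\in K_i}\langle f,u_{i,k}\rangle_\Hil e_{i,k}$ to converge in $\tilde\Hil_i$ before applying $\theta_i^\dagger$, and one should note that the double sum over $(i,k)$ converges unconditionally because $\{u_{i,k}\}$ and $\{v_{i,k}\}$ are frames (so the coefficient sequence $\{\langle f,u_{i,k}\rangle_\Hil\}$ lies in $\ell^2$ and $\{v_{i,k}\}$ is a Bessel sequence). I would also point out that, once the frame property is established, one could alternatively invoke Lemma \ref{dual} to get the frame conclusion directly from the Bessel bounds and the reconstruction identity, but the direct argument above is self-contained. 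No case analysis is needed, and the result is really just the observation that the construction \eqref{215} carries g-frame duality to ordinary frame duality.
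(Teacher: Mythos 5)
Your proposal is correct and follows essentially the same route as the paper: expand $\Lambda_i f$ in the orthonormal basis $\{e_{i,k}\}$, use $\langle\Lambda_i f,e_{i,k}\rangle_{\tilde\Hil_i}=\langle f,u_{i,k}\rangle_\Hil$, apply $\theta_i^\dagger$ by continuity, and substitute into the alternate-dual identity $f=\sum_i\theta_i^\dagger\Lambda_i f$ to obtain $f=\sum_{i,k}\langle f,u_{i,k}\rangle_\Hil v_{i,k}$. The only difference is that you also spell out the frame bounds for $\{u_{i,k}\}$ and $\{v_{i,k}\}$, which the paper leaves to its earlier remark following (\ref{215}).
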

\begin{proof}
Let $f\in \Hil$. We have
\begin{equation*}
\begin{aligned}
f=\sum_{i\in J}\theta_{i}^\dagger\Lambda_{i}f&=\sum_{i\in J}\theta_{i}^\dagger\left(\sum_{k\in K_i}\langle \Lambda_{i}f, e_{i,k}\rangle_{\tilde\Hil_i} e_{i,k} \right)=
\sum_{i\in J}\sum_{k\in K_i}\langle \Lambda_{i}f, e_{i,k}\rangle_{\tilde\Hil_i} v_{i,k}
\\
&=\sum_{i\in J}\sum_{k\in K_i}\left\langle  f, \Lambda_{i}^\dagger e_{i,k}\right\rangle_{\Hil} v_{i,k}
=\sum_{i\in J}\sum_{k\in K_i}\langle f, u_{i,k}\rangle v_{i,k}
\end{aligned}
\end{equation*}
\end{proof}

A similar result was also recently obtained in \cite{jove}. Let $\Lc=\{\Lambda_j\in B(\Hil,\tilde\Hil_j), \,j\in J\}$ and $\Theta=\{\theta_j\in B(K,\tilde\Hil_j), \,j\in J\}$ be $g$-frames of $(\Hil,\{\tilde\Hil_j\})$ respectively, where $K$ is another Hilbert space, in general different from $\Hil$. We say that $\Lc$ and $\Theta$ are similar if there is a bounded invertible operator $U:\Hil\rightarrow K$ so that
$\Lambda_{j}=\Theta_{j}U$ for all $j\in J$. In \cite{na} the following result is proved for $\Hil=K$:

\begin{prop}\label{kam}
Let $\Lc=\{\Lambda_j\in B(\Hil,\tilde\Hil_j), \,j\in J\}$ and $\Theta=\{\theta_j\in B(\Hil,\tilde\Hil_j), \,j\in J\}$ be $g$-frames of $(\Hil,\{\tilde\Hil_j\})$. $\Lc$ and $\Theta$ are similar if and only if  their analysis operators have the same ranges.
\end{prop}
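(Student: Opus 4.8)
The plan is to rephrase both ``similarity'' and ``equality of ranges'' as statements about the analysis operators $T_\Lc,T_\Theta:\Hil\to\hat\Hil$, and then to exploit the fact that the analysis operator of a $g$-frame is bounded below.

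One implication is immediate. Suppose $\Lc$ and $\Theta$ are similar, so $\Lambda_j=\theta_j U$ for all $j\in J$ with $U\in B(\Hil)$ boundedly invertible. Then for every $f\in\Hil$ and every $j$ we have $(T_\Lc f)_j=\Lambda_j f=\theta_j(Uf)=(T_\Theta(Uf))_j$, hence $T_\Lc=T_\Theta U$ as operators into $\hat\Hil$. Since $U$ is a bijection of $\Hil$, $\operatorname{Ran}(T_\Lc)=\operatorname{Ran}(T_\Theta U)=\operatorname{Ran}(T_\Theta)$.

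For the converse, the crucial point is that the lower $g$-frame inequality in (\ref{21}), applied to $\Lc$ with bounds $A_\Lc\le B_\Lc$, gives $\|T_\Lc f\|_{\hat\Hil}^2=\sum_{j\in J}\|\Lambda_j f\|_{\tilde\Hil_j}^2\ge A_\Lc\|f\|_\Hil^2$; thus $T_\Lc$ is bounded below, hence injective, its range $R_\Lc:=\operatorname{Ran}(T_\Lc)$ is closed (a standard Cauchy-sequence argument), and the map $T_\Lc^{-1}:R_\Lc\to\Hil$ is bounded, with $\|T_\Lc^{-1}\|\le A_\Lc^{-1/2}$. The same holds for $\Theta$, with closed range $R_\Theta$ and bounded $T_\Theta^{-1}:R_\Theta\to\Hil$. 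Now assume $R_\Lc=R_\Theta=:R$ and set
$$U:=T_\Theta^{-1}\,T_\Lc\in B(\Hil),$$
which is well defined because $T_\Lc$ maps $\Hil$ into $R=\operatorname{Ran}(T_\Theta)$, the domain of $T_\Theta^{-1}$. Symmetrically, $V:=T_\Lc^{-1}\,T_\Theta\in B(\Hil)$ is well defined because $R=\operatorname{Ran}(T_\Lc)$. One checks $VU=T_\Lc^{-1}T_\Theta T_\Theta^{-1}T_\Lc=T_\Lc^{-1}T_\Lc=\1_\Hil$ and $UV=\1_\Hil$, using that $T_\Theta T_\Theta^{-1}$ and $T_\Lc T_\Lc^{-1}$ act as the identity on $R$. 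Hence $U$ is boundedly invertible. Finally $T_\Theta U=T_\Theta T_\Theta^{-1}T_\Lc=T_\Lc$ (once more because $T_\Lc f\in R$), and reading this equality in the $j$-th component yields $\theta_j U=\Lambda_j$ for every $j\in J$, i.e.\ $\Lc$ and $\Theta$ are similar.

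I do not foresee a serious obstacle; the two points needing care are that $\operatorname{Ran}(T_\Lc)$ is genuinely closed (so that $R$ is a Hilbert space and the restricted inverses make sense), and that the hypothesis must be used as the full equality $R_\Lc=R_\Theta$ rather than a single inclusion: one inclusion only produces a bounded intertwiner $U$ with $T_\Theta U=T_\Lc$, whereas the invertibility of $U$ genuinely needs the reverse inclusion to build $V=U^{-1}$.
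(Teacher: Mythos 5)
Your proof is correct. Note that the paper itself offers no proof of Proposition \ref{kam}: it is simply quoted from reference \cite{na}, so there is nothing internal to compare against. Your argument is the standard (and, as far as I can tell, the intended) one: both directions hinge on the factorization $T_\Lc=T_\Theta U$, and the converse is carried exactly by the observation you flag yourself, namely that the lower $g$-frame bound makes $T_\Lc$ and $T_\Theta$ bounded below, hence injective with closed range, so that the restricted inverses $T_\Theta^{-1}\restr_{R}$ and $T_\Lc^{-1}\restr_{R}$ exist and are bounded; the two-sided identity $VU=UV=\1_\Hil$ and the componentwise reading of $T_\Theta U=T_\Lc$ then close the argument. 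Your closing caveat — that a single inclusion of ranges only yields a one-sided intertwiner, and the full equality is needed for invertibility of $U$ — is exactly the right point to insist on.
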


\begin{prop}
Let $\Lc=\{\Lambda_j\in B(\Hil,\tilde\Hil_j), \,j\in J\}$  be a $g$-frame of $(\Hil,\{\tilde\Hil_j\})$. Then $\Theta=\{\theta_j\in B(\Hil,\tilde\Hil_j), \,j\in J\}$ is the canonical dual of $\Lambda$, $\tilde\Lc$, if and only if $\|T_{\Theta}f\|\leq\|T_{\Gamma}f\|$ for all $f\in\Hil$ and for each dual $g$-frame $\Gamma$ of $\Lc$.
\end{prop}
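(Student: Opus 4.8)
The plan is to carry over to the $g$-frame setting the classical fact that, among all dual frames, the canonical dual produces coefficient sequences of minimal $\ell^2$-norm, the proof being an orthogonality argument in $\hat\Hil$. It is implicit in the statement, and necessary for the ``if'' part, that $\Theta$ be a dual $g$-frame of $\Lc$; granting that, the proposition becomes a characterisation of $\tilde\Lc$ among all such duals.

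First I would collect the two identities that do the work. From $\tilde\Lambda_j=\Lambda_j S_\Lc^{-1}$ one reads off $T_{\tilde\Lc}=T_\Lc S_\Lc^{-1}$, and since $S_\Lc^{-1}\in B(\Hil)$ is invertible this gives $\mathrm{Range}(T_{\tilde\Lc})=\mathrm{Range}(T_\Lc)$, together with $T_\Lc^\dagger T_{\tilde\Lc}=S_\Lc S_\Lc^{-1}=\1_\Hil$, which is just (\ref{29bis}). Secondly, for any dual $g$-frame $\Gamma$ of $\Lc$ the defining relation $f=\sum_{j}\Lambda_j^\dagger\gamma_j f$, that is $T_\Lc^\dagger T_\Gamma=\1_\Hil$, holds.

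The key step is the orthogonal decomposition $T_\Gamma f=T_{\tilde\Lc}f+(T_\Gamma f-T_{\tilde\Lc}f)$ in $\hat\Hil$. For every $g\in\Hil$ one computes
$$
\left\langle T_\Lc g,\ T_\Gamma f-T_{\tilde\Lc}f\right\rangle_{\hat\Hil}=\left\langle g,\ T_\Lc^\dagger T_\Gamma f-T_\Lc^\dagger T_{\tilde\Lc}f\right\rangle_\Hil=\left\langle g,\ f-f\right\rangle_\Hil=0,
$$
so $T_\Gamma f-T_{\tilde\Lc}f$ is orthogonal to $\mathrm{Range}(T_\Lc)=\mathrm{Range}(T_{\tilde\Lc})$, while $T_{\tilde\Lc}f$ lies in that subspace. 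Pythagoras then gives $\|T_\Gamma f\|_{\hat\Hil}^2=\|T_{\tilde\Lc}f\|_{\hat\Hil}^2+\|T_\Gamma f-T_{\tilde\Lc}f\|_{\hat\Hil}^2\ge\|T_{\tilde\Lc}f\|_{\hat\Hil}^2$, which is precisely the ``only if'' implication (with $\Theta=\tilde\Lc$).

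For the converse, assume the dual $g$-frame $\Theta$ satisfies $\|T_\Theta f\|\le\|T_\Gamma f\|$ for all $f$ and all dual $g$-frames $\Gamma$. Taking $\Gamma=\tilde\Lc$ yields $\|T_\Theta f\|\le\|T_{\tilde\Lc}f\|$, while the inequality just proved, applied with $\Theta$ itself in the role of $\Gamma$, yields the reverse bound; hence $\|T_\Theta f\|=\|T_{\tilde\Lc}f\|$ for all $f$. Substituting this into the Pythagorean identity $\|T_\Theta f\|^2=\|T_{\tilde\Lc}f\|^2+\|T_\Theta f-T_{\tilde\Lc}f\|^2$ forces $T_\Theta f=T_{\tilde\Lc}f$ for every $f$, i.e. $\theta_j=\tilde\Lambda_j$ for all $j$, so $\Theta=\tilde\Lc$. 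The one point that needs care is the bookkeeping of adjoints: one must invoke the dual relation in the form $T_\Lc^\dagger T_\Gamma=\1_\Hil$ (equivalently $f=\sum_j\Lambda_j^\dagger\gamma_j f$) so that the cross term genuinely cancels, and one must keep the standing hypothesis that $\Theta$ is a dual $g$-frame; with those in place the rest is the standard minimal-norm-dual computation.
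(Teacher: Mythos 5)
Your proposal is correct and follows essentially the same route as the paper: the orthogonality of $T_\Gamma f-T_{\tilde\Lc}f$ to $\mathrm{Range}(T_\Lc)=\mathrm{Range}(T_{\tilde\Lc})$, the Pythagorean identity, and then the double inequality plus that same identity to force $T_\Theta=T_{\tilde\Lc}$ in the converse. The only cosmetic difference is that you justify $\mathrm{Range}(T_{\tilde\Lc})=\mathrm{Range}(T_\Lc)$ directly from $T_{\tilde\Lc}=T_\Lc S_\Lc^{-1}$, whereas the paper cites its Proposition 10 on similar $g$-frames; your explicit remark that $\Theta$ must be assumed to be a dual $g$-frame of $\Lc$ is also consistent with how the paper uses the hypothesis.
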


\begin{proof}
Suppose first that $\Theta\equiv\widetilde{\Lambda}$, the canonical dual of $\Lambda$, and $\Gamma=\{\Gamma_j\in B(\Hil,\tilde\Hil_j), \,j\in J\}$ is a (generic) dual of $\Lambda$.
Then, for all $f,g\in \Hil$,
$$
\langle T_{\Gamma}f-T_{\widetilde{\Lc}}f,T_{\Lc}g\rangle_{\hat\Hil}=\langle T_{\Gamma}f,T_{\Lc}g\rangle_{\hat\Hil}-\langle T_{\widetilde{\Lc}}f,T_{\Lc}g\rangle_{\hat\Hil}=
$$
$$
=\langle f,T_{\Gamma}^\dagger\,T_{\Lc}g\rangle_{\Hil}-\langle f,T_{\widetilde{\Lc}}^\dagger T_{\Lc}g \rangle_{\Hil}=\langle f,\,g\rangle_{\Hil}-\langle f,\,g\rangle_{\Hil}=0,
$$
since $T_{\Gamma}^\dagger\,T_{\Lc}=T_{\widetilde{\Lc}}^\dagger T_{\Lc}=\1_\Hil$. Therefore
 $Range(T_{\Gamma}-T_{\widetilde{\Lc}})\perp Range(T_{\Lc})$. Now, since $\widetilde{\Lc}$ and $\Lc$ are similar, Proposition 10 implies that
$Range(T_{\Gamma}-T_{\widetilde{\Lc}})\perp Range(T_{\widetilde{\Lc}})$ so that, for all $f,g\in\Hil$, $\langle T_{\Gamma}f-T_{\widetilde{\Lc}}f,T_{\tilde\Lc}g\rangle_{\hat\Hil}=0$. Then a direct computation shows that
\be
\|T_\Gamma f\|^2_{\hat\Hil}=\|T_\Gamma f-T_{\widetilde{\Lc}}f\|^2_{\hat\Hil}+\|T_{\widetilde{\Lc}} f\|^2_{\hat\Hil},
\label{3add1}\en
so that \be\|T_{\widetilde{\Lc}} f\|_{\hat\Hil}\leq \|T_\Gamma f\|_{\hat\Hil}\label{30}\en for all duals $\Gamma$ of $\Lc$ and for all $f\in\Hil$. Now, since $\Theta=\tilde\Lc$ by assumption, the statement follows.

\vspace{2mm}

Let us now prove the vice-versa. Hence we assume that $\|T_{\Theta} f\|_{\hat\Hil}\leq \|T_\Gamma f\|_{\hat\Hil}$ for all dual g-frame $\Gamma$ of $\Lc$ and for all $f\in\Hil$. We want to show that $\Theta=\tilde\Lc$.

First, since $\tilde\Lc$ is a dual g-frame of $\Lc$, $\|T_{\Theta} f\|_{\hat\Hil}\leq \|T_{\tilde\Lc} f\|_{\hat\Hil}$, $\forall f\in\Hil$.  On the other hand, using (\ref{30}) with $\Gamma=\Theta$, we find  $\|T_{\tilde\Lc} f\|_{\hat\Hil}\leq \|T_{\Theta} f\|_{\hat\Hil}$, $\forall f\in\Hil$. Hence, for all $f\in\Hil$, $\|T_{\tilde\Lc} f\|_{\hat\Hil}= \|T_{\Theta} f\|_{\hat\Hil}$. The same procedure which produces (\ref{3add1}) also shows that
$
\|T_\Theta f\|^2_{\hat\Hil}=\|T_\Theta f-T_{\widetilde{\Lc}}f\|^2_{\hat\Hil}+\|T_{\widetilde{\Lc}} f\|^2_{\hat\Hil},
$
which, because of the previous equality, implies that $\|T_\Theta f-T_{\widetilde{\Lc}}f\|^2_{\hat\Hil}=0$. Hence $T_\Theta f=T_{\widetilde{\Lc}}f$ for all $f\in\Hil$ and, as a consequence, $\Theta=\tilde\Lc$.

\end{proof}

\begin{prop}\label{gr4}
If $\Lc=\{\Lambda_j\in B(\Hil,\tilde\Hil_j), \,j\in J\}$   is a $g$-Riesz basis  of $(\Hil,\{\tilde\Hil_j\})$, then there exists a unique sequence $\Theta=\{\theta_{j}\in B(\Hil,\tilde\Hil_{j}), \,j\in J\}$ such that $f=\sum_{j\in J}\theta_{j}^\dagger\Lambda_{j}f$ for each $ f\in \Hil$, and $\Theta=\tilde\Lc$.
\end{prop}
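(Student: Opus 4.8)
The plan is to dispose of existence in one line and then establish uniqueness by reducing to a g-on basis, where the orthonormality relations determine the reconstructing operators one index at a time. For existence, note that a g-Riesz basis is in particular a g-frame (by inequality (\ref{213})), so the canonical dual $\tilde\Lc=\{\tilde\Lambda_j=\Lambda_jS_\Lc^{-1}\in B(\Hil,\tilde\Hil_j)\}$ is well defined and, by (\ref{28})--(\ref{29}), satisfies $f=\sum_{j\in J}\tilde\Lambda_j^\dagger\Lambda_j f$ for every $f\in\Hil$. Thus $\Theta=\tilde\Lc$ is a sequence of the required kind, and the whole content of the statement is that it is the only one.

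To prove uniqueness I would invoke Definition \ref{def2c} to write $\Lambda_j=\psi_j X$, where $\{\psi_j\}$ is a g-on basis and $X\in B(\Hil)$ has $X^{-1}\in B(\Hil)$. Since $\sum_{j\in J}\psi_j^\dagger\psi_j=\1_\Hil$ (g-on basis), one gets $S_\Lc=X^\dagger X$ and hence $\tilde\Lambda_j=\psi_j X(X^\dagger X)^{-1}=\psi_j(X^\dagger)^{-1}$. Now suppose $\Theta=\{\theta_j\in B(\Hil,\tilde\Hil_j)\}$ satisfies $f=\sum_{j\in J}\theta_j^\dagger\Lambda_j f$ for all $f\in\Hil$. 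Substituting $f=X^{-1}g$ and using $\Lambda_j X^{-1}=\psi_j$ turns this into $\sum_{j\in J}\theta_j^\dagger\psi_j\,g=X^{-1}g$ for all $g\in\Hil$.

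The last step reads off $\theta_{j_0}$ for a fixed $j_0\in J$. From the g-orthonormality relation (\ref{210}), testing the identity $\langle\psi_j^\dagger\tilde f_j,\psi_{j_0}^\dagger\tilde f_{j_0}\rangle_\Hil=\delta_{j,j_0}\langle\tilde f_j,\tilde f_{j_0}\rangle_{\tilde\Hil}$ against the choice $\tilde f_j=\psi_j\psi_{j_0}^\dagger\tilde f_{j_0}\in\tilde\Hil_j$ gives $\psi_j\psi_{j_0}^\dagger=0$ for $j\neq j_0$ and $\psi_{j_0}\psi_{j_0}^\dagger=\1_{\tilde\Hil_{j_0}}$. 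Feeding $g=\psi_{j_0}^\dagger\tilde h$, with $\tilde h\in\tilde\Hil_{j_0}$ arbitrary, into $\sum_{j\in J}\theta_j^\dagger\psi_j g=X^{-1}g$, every term with $j\neq j_0$ vanishes and only $\theta_{j_0}^\dagger\tilde h=X^{-1}\psi_{j_0}^\dagger\tilde h$ survives; since $\tilde h$ is arbitrary, $\theta_{j_0}^\dagger=X^{-1}\psi_{j_0}^\dagger$, i.e. $\theta_{j_0}=\psi_{j_0}(X^\dagger)^{-1}=\tilde\Lambda_{j_0}$. As $j_0$ was arbitrary, $\Theta=\tilde\Lc$.

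The point requiring care is that $\Theta$ is a priori only a sequence of bounded operators for which the reconstruction series converges — not a g-frame, nor even a g-Bessel family — so one may not introduce its analysis operator $T_\Theta$ and deduce $T_\Theta^\dagger=T_\Lc^{-1}$ directly from $T_\Theta^\dagger T_\Lc=\1_\Hil$ (which would be legitimate once one observes that $T_\Lc=T_\Psi X$ is a bounded bijection onto $\hat\Hil$, $T_\Psi$ being unitary by (\ref{211bis})). The substitution $g=\psi_{j_0}^\dagger\tilde h$ is exactly what circumvents this, collapsing the otherwise merely convergent series to a single nonzero term. Alternatively one could argue via the equivalent characterization of g-Riesz bases recalled earlier — g-completeness together with the two-sided estimate on finite sums $\|\sum_{j\in I}\Lambda_j^\dagger g_j\|_\Hil$ — which makes the synthesis operator $T_\Lc^\dagger$ bounded below, hence injective on $\hat\Hil$, forcing uniqueness of the coefficient operators in any reconstruction formula; but the g-on reduction is the more elementary route and the one I would take.
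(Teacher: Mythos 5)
Your proof is correct, but it follows a genuinely different route from the paper's. The paper takes $\Theta$ to be an alternate dual $g$-frame, writes $\sum_{i\in J}\Lambda_i^\dagger(\tilde\Lambda_i-\theta_i)f=0$, and concludes $(\tilde\Lambda_i-\theta_i)f=0$ from the injectivity of the synthesis operator $T_\Lc^\dagger$ (which comes from Proposition \ref{gr2}: for a $g$-Riesz basis $Range(T_\Lc)=\hat\Hil$, so $\ker T_\Lc^\dagger=(Range\,T_\Lc)^\perp=\{0\}$). You instead conjugate by the invertible $X$ of Definition \ref{def2c} to reduce to the underlying $g$-on basis $\{\psi_j\}$, derive $\psi_j\psi_{j_0}^\dagger=\delta_{j,j_0}\1_{\tilde\Hil_{j_0}}$ from (\ref{210}), and read off $\theta_{j_0}^\dagger=X^{-1}\psi_{j_0}^\dagger$ by evaluating the reconstruction identity on vectors $g=\psi_{j_0}^\dagger\tilde h$. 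The trade-off is exactly the one you identify: the paper's argument is shorter but implicitly needs the difference sequence $\{(\tilde\Lambda_i-\theta_i)f\}_{i\in J}$ to be an element of $\hat\Hil$ before $T_\Lc^\dagger$ can be applied to it, which is guaranteed only because $\Theta$ is assumed to be a $g$-frame (hence $g$-Bessel); your evaluation trick collapses the series to a single term and therefore establishes uniqueness within the larger class of arbitrary sequences of bounded operators for which the reconstruction series converges, which is what the literal wording of the statement (``unique sequence'') asks for. Your argument is also self-contained in that it does not invoke Proposition \ref{gr2}, which in the paper appears only after this proposition. One tiny remark: for $j=j_0$ the identity $\psi_{j_0}\psi_{j_0}^\dagger=\1_{\tilde\Hil_{j_0}}$ follows more directly by polarizing (\ref{210}) with $j=k=j_0$, without the auxiliary test vector; your version still works but proves first only that $\psi_{j_0}\psi_{j_0}^\dagger$ is an idempotent.
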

\begin{proof}
Let $\Lc$   be a $g$-Riesz basis  of $(\Hil,\{\tilde\Hil_j\})$ and $\Theta$  an alternate dual $g$-frame of $\Lc$. Then, for all $ f\in \Hil$,
\begin{equation}\label{gr1}
 \sum_{i\in J}\Lambda_{i}^\dagger(\widetilde{\Lambda}_{i}-\theta_{i})f=0.
\end{equation}
Since $\Lc$ is a $g$-Riesz basis, its synthesis operator $T_\Lc^\dagger$ is injective. Hence (\ref{gr1}) implies that $(\widetilde{\Lambda}_{i}-\theta_{i})f=0$, for all $ f\in \Hil,$ and so $\widetilde{\Lambda}_{i}=\theta_{i},$ for all $ i\in J$, and $\tilde\Lc=\Theta$ as a consequence.

\end{proof}

\begin{prop}\label{gr2}
Let $\Lc=\{\Lambda_j\in B(\Hil,\tilde\Hil_j), \,j\in J\}$   be a $g$-frame of  $(\Hil,\{\tilde\Hil_j\})$. Then $\Lc$ is a $g$-Riesz basis if and only if $Range (T_{\Lc})=\hat\Hil$.
\end{prop}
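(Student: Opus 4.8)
The plan is to prove the two implications separately, exploiting the characterization of $g$-Riesz bases via $g$-completeness plus the two-sided bound on finite sums $\sum_{j\in I}\Lambda_j^\dagger g_j$ recalled after Definition \ref{def2c}, together with the elementary properties of the analysis operator $T_\Lc$.

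For the direction ``$g$-Riesz basis $\Rightarrow Range(T_\Lc)=\hat\Hil$'', I would first observe that $Range(T_\Lc)$ is closed: since $\Lc$ is a $g$-frame, $S_\Lc=T_\Lc^\dagger T_\Lc\geq A\1_\Hil$, so $\|T_\Lc f\|_{\hat\Hil}^2=\langle S_\Lc f,f\rangle_\Hil\geq A\|f\|_\Hil^2$, whence $T_\Lc$ is bounded below and its range is closed. Thus it suffices to show $Range(T_\Lc)$ is dense in $\hat\Hil$, i.e. that $Range(T_\Lc)^\perp=\{0\}$. Take $\underline g=\{\tilde g_j\}_{j\in J}\in\hat\Hil$ orthogonal to every $T_\Lc f$; then $0=\langle T_\Lc f,\underline g\rangle_{\hat\Hil}=\langle f,T_\Lc^\dagger\underline g\rangle_\Hil=\langle f,\sum_j\Lambda_j^\dagger\tilde g_j\rangle_\Hil$ for all $f\in\Hil$, so $\sum_{j\in J}\Lambda_j^\dagger\tilde g_j=0$. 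Now I would use the lower bound from the finite-sum characterization: restricting to any finite $I\subseteq J$ and applying it to the tail appropriately, or more directly, noting that $\sum_{j\in J}\Lambda_j^\dagger\tilde g_j=0$ forces each ``coordinate contribution'' to vanish because the $\Lambda_j^\dagger$ are, via (\ref{212}), $\theta_j^\dagger X^\dagger$ with $\theta_j$ coming from a $g$-on basis, so $\{\theta_j^\dagger\tilde g_j\}$ are mutually orthogonal in $\Hil$ (by (\ref{210})); hence $0=\|\sum_j\theta_j^\dagger(X^\dagger)\cdots\|$ — cleaner is to write $\underline g$ via $\Theta$: since $\Theta$ is a $g$-on basis, (\ref{211bis}) gives $T_\Theta T_\Theta^\dagger=\1_{\hat\Hil}$, so $\underline g=T_\Theta(T_\Theta^\dagger\underline g)$, and $0=T_\Lc^\dagger\underline g=X^\dagger T_\Theta^\dagger\underline g$; since $X^\dagger$ is invertible, $T_\Theta^\dagger\underline g=0$, and then $\underline g=T_\Theta T_\Theta^\dagger\underline g=0$. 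This gives $Range(T_\Lc)=\hat\Hil$.

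For the converse, assume $\Lc$ is a $g$-frame with $Range(T_\Lc)=\hat\Hil$; I must show $\Lc$ is a $g$-Riesz basis, i.e. produce $X\in B(\Hil)$ with bounded inverse and a $g$-on basis $\Theta$ with $\Lambda_j=\theta_j X$. The natural candidate is $X:=S_\Lc^{1/2}$ and $\theta_j:=Q_j=\Lambda_j S_\Lc^{-1/2}$, the Parseval $g$-frame constructed earlier in the paper; then automatically $\Lambda_j=\theta_j X$ with $X$ boundedly invertible (since $A\1_\Hil\le S_\Lc\le B\1_\Hil$). It remains to check that $Q=\{Q_j\}$ is actually a $g$-on \emph{basis}, not merely Parseval. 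The Parseval equality (\ref{211}) holds by construction, so the only thing to verify is the $g$-orthonormality relation (\ref{210}) for the $Q_j$. The key point is that $T_Q$ is a surjective isometry from $\Hil$ onto $\hat\Hil$: surjectivity follows because $T_Q=T_\Lc S_\Lc^{-1/2}$ and $S_\Lc^{-1/2}$ is a bijection of $\Hil$, so $Range(T_Q)=Range(T_\Lc)=\hat\Hil$; and $T_Q$ is isometric because $Q$ is Parseval. Hence $T_Q$ is unitary, so $T_Q T_Q^\dagger=\1_{\hat\Hil}$. Feeding $\underline f=\{\delta_{j,j_0}\tilde f_{j_0}\}$ and $\underline g=\{\delta_{k,k_0}\tilde g_{k_0}\}$ into $\langle\underline f,T_Q T_Q^\dagger\underline g\rangle_{\hat\Hil}=\langle\underline f,\underline g\rangle_{\hat\Hil}$ and expanding $T_Q^\dagger$ via (\ref{25}) yields exactly $\langle Q_{j_0}^\dagger\tilde f_{j_0},Q_{k_0}^\dagger\tilde g_{k_0}\rangle_\Hil=\delta_{j_0,k_0}\langle\tilde f_{j_0},\tilde g_{k_0}\rangle_{\tilde\Hil}$, which is (\ref{210}). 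Therefore $Q$ is a $g$-on basis and $\Lc$ is a $g$-Riesz basis.

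The main obstacle, and the step I would be most careful about, is the converse: specifically, upgrading the Parseval $g$-frame $Q$ to a genuine $g$-on \emph{basis}. Parseval alone does not give the orthogonality relation (\ref{210}) — this is where surjectivity of the range is essential, turning the isometry $T_Q$ into a unitary and hence giving $T_Q T_Q^\dagger=\1_{\hat\Hil}$, from which (\ref{210}) drops out. Getting the bookkeeping of the coordinate test-vectors $\underline f,\underline g$ right in $\hat\Hil$ (and confirming they indeed lie in $\hat\Hil$, which is immediate since they are finitely supported) is the only slightly delicate computation; everything else is routine bounded-operator manipulation using results already established above.
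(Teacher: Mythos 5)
Your proof is correct, but the converse direction takes a genuinely different route from the paper's. For the forward implication you argue essentially as the paper does: pick $\underline h\perp Range(T_\Lc)$, use $T_\Lc=T_\Theta X$ and (\ref{211bis}) to force $\underline h=0$; your explicit remark that $Range(T_\Lc)$ is closed because $T_\Lc$ is bounded below is a detail the paper glosses over, and it is needed to pass from density to equality. For the converse, the paper instead shows that $T_\Lc^\dagger$ is injective (trivial from the range hypothesis) and surjective (via $f=T_\Lc^\dagger T_{\tilde\Lc}f$), hence invertible, and then verifies the \emph{alternative} characterization of $g$-Riesz bases quoted after Definition \ref{def2c} — $g$-completeness plus the two-sided bound $\|(T_\Lc^\dagger)^{-1}\|^{-2}\sum_j\|g_j\|^2\leq\|\sum_j\Lambda_j^\dagger g_j\|^2\leq\|T_\Lc^\dagger\|^2\sum_j\|g_j\|^2$ — citing \cite{sun} for the equivalence of that characterization with Definition \ref{def2c}. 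You instead exhibit the data required by Definition \ref{def2c} directly: $X=S_\Lc^{1/2}$ and $\theta_j=Q_j=\Lambda_jS_\Lc^{-1/2}$, upgrading the Parseval $g$-frame $Q$ to a $g$-on basis by observing that $T_Q=T_\Lc S_\Lc^{-1/2}$ is an isometry with range $Range(T_\Lc)=\hat\Hil$, hence unitary, so $T_QT_Q^\dagger=\1_{\hat\Hil}$, from which (\ref{210}) follows by testing against finitely supported vectors. This buys you independence from the unproved equivalence imported from \cite{sun}, at the cost of the extra (but routine) verification of $g$-orthonormality; you correctly identify this upgrade as the delicate step. One cosmetic remark: in the forward direction you start an argument via mutual orthogonality of the $\theta_j^\dagger\tilde g_j$ and abandon it mid-sentence before switching to the clean factorization $T_\Lc^\dagger=X^\dagger T_\Theta^\dagger$; only the latter should survive in a final write-up.
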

\begin{proof}
Let us first prove that if $\Lc$ is a $g$-Riesz basis then $Range (T_{\Lc})=\hat\Hil$.

Indeed we know that there exists a bounded invertible operator $X\in B(\Hil)$ and a g-on basis $\Theta=\{\theta_{j}\in B(\Hil,\tilde\Hil_{j}), \,j\in J\}$ such that $\Lambda_j=\theta_j X$ for all $j\in J$. Hence, taking $f\in\Hil$, $T_\Lc f=T_\Theta (Xf)$. Let now $\underline h\in\hat\Hil$ be in $(Range(T_\Lc))^\perp$. We will show that $\underline h=0$. Infact, for $f\in\Hil$ we have
$$
0=\left<\underline h,T_\Lc f\right>_{\hat\Hil}=\left<T_\Theta^\dagger\,\underline h, X\,f\right>_{\Hil}
$$
which implies, because of the arbitrariness of $X\,f$ in $\Hil$, that  $T_\Theta^\dagger\,\underline h=0$. Now, since $T_\Theta T_\Theta^\dagger=\1_{\hat\Hil}$, we deduce that  $\underline h=0$. Hence $Range (T_{\Lc})=\hat\Hil$.

\vspace{1mm}

Let now $Range (T_{\Lc})=\hat\Hil$. Then, since $\ker(T_\Lc^\dagger)=(Range(T_\Lc))^\perp=\{0\}$, $T_\Lc^\dagger$ is injective. Moreover, $T_\Lc^\dagger$ is surjective since any $f\in\Hil$ can be written as $f=T_\Lc^\dagger\left(T_{\tilde\Lc}f\right)$. Hence $T_\Lc^\dagger$ is invertible. Then it is  simple to check that, for all $\underline g\in \hat\Hil$,
$$
\|(T_\Lc^\dagger)^{-1}\|^{-2}\sum_{j\in J}\|g_j\|^2_{\tilde\Hil_j}\leq  \left\|\sum_{j\in J}\Lambda_j^\dagger g_j\right\|^2_\Hil\leq \|T_\Lc^\dagger\|^{2}\sum_{j\in J}\|g_j\|^2_{\tilde\Hil_j}.
$$
Moreover it is also clear that $\Lc$ is g-complete. Hence, \cite{sun}, $\Lc$ is a g-Riesz basis.

\end{proof}

A similar result was proved in \cite{zhu}, while in \cite{licao} the authors prove a result close to the following theorem.

\begin{thm}\label{gr3}
Let $\Lc=\{\Lambda_j\in B(\Hil,\tilde\Hil_j), \,j\in J\}$   be a $g$-frame of  $(\Hil,\{\tilde\Hil_j\})$ but not a $g$-Riesz basis. Then $\Lc$ has a dual $g$-frame which is different from its
canonical dual $\tilde\Lc$.
\end{thm}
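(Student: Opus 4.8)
The idea is to exploit the overcompleteness of $\Lc$. Since $\Lc$ is a $g$-frame, the lower bound in (\ref{21}) makes $T_\Lc$ bounded below, so $\mathrm{Range}(T_\Lc)$ is a \emph{closed} subspace of $\hat\Hil$; and since $\Lc$ is not a $g$-Riesz basis, Proposition~\ref{gr2} gives $\mathrm{Range}(T_\Lc)\neq\hat\Hil$. Hence $\mathrm{Range}(T_\Lc)$ is a proper closed subspace, so I can fix a nonzero $\underline c=\{c_j\}_{j\in J}\in\hat\Hil$ orthogonal to it; equivalently (using (\ref{25})), $T_\Lc^\dagger\underline c=\sum_{j\in J}\Lambda_j^\dagger c_j=0$.

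Next I would fix any nonzero $h\in\Hil$ and perturb the canonical dual by setting $\theta_j f:=\tilde\Lambda_j f+\langle h,f\rangle_\Hil\,c_j$ for $f\in\Hil$. Each $\theta_j$ is a bounded operator from $\Hil$ into $\tilde\Hil_j$, since $\|\langle h,f\rangle_\Hil c_j\|_{\tilde\Hil_j}\le\|h\|_\Hil\|c_j\|_{\tilde\Hil_j}\|f\|_\Hil$; at the level of analysis operators this reads $T_\Theta=T_{\tilde\Lc}+V$, where $Vf:=\langle h,f\rangle_\Hil\,\underline c$ is a bounded map $\Hil\to\hat\Hil$ with $\|Vf\|_{\hat\Hil}\le\|h\|_\Hil\|\underline c\|_{\hat\Hil}\|f\|_\Hil$. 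From $\sum_{j\in J}\|\theta_jf\|_{\tilde\Hil_j}^2=\|T_\Theta f\|_{\hat\Hil}^2\le(\|T_{\tilde\Lc}\|+\|V\|)^2\|f\|_\Hil^2$, the set $\Theta:=\{\theta_j\in B(\Hil,\tilde\Hil_j),\,j\in J\}$ is a $g$-Bessel family. The dual relation is then immediate: by (\ref{29bis}),
\[
T_\Theta^\dagger T_\Lc=T_{\tilde\Lc}^\dagger T_\Lc+V^\dagger T_\Lc=\1_\Hil+V^\dagger T_\Lc ,
\]
while for all $f,g\in\Hil$ one has $\langle V^\dagger T_\Lc f,g\rangle_\Hil=\langle T_\Lc f,Vg\rangle_{\hat\Hil}=\langle h,g\rangle_\Hil\langle T_\Lc f,\underline c\rangle_{\hat\Hil}=\langle h,g\rangle_\Hil\langle f,T_\Lc^\dagger\underline c\rangle_\Hil=0$, so $V^\dagger T_\Lc=0$ and $T_\Theta^\dagger T_\Lc=\1_\Hil$, that is, $f=\sum_{j\in J}\theta_j^\dagger\Lambda_j f$ for every $f\in\Hil$.

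It remains to upgrade $\Theta$ from a $g$-Bessel family to a genuine dual $g$-frame and to check it differs from $\tilde\Lc$. Taking adjoints in $T_\Theta^\dagger T_\Lc=\1_\Hil$ gives $T_\Lc^\dagger T_\Theta=\1_\Hil$, i.e. $f=\sum_{j\in J}\Lambda_j^\dagger\theta_j f$; since $\Lc$ is a $g$-frame (hence $g$-Bessel) and $\Theta$ is $g$-Bessel, Lemma~\ref{dual} shows both $\Lc$ and $\Theta$ are $g$-frames. Hence $\Theta$ is an (alternate) dual $g$-frame of $\Lc$. Finally $\Theta\neq\tilde\Lc$: since $\underline c\neq0$ there is $j_0\in J$ with $c_{j_0}\neq0$, and since $h\neq0$ there is $f\in\Hil$ with $\langle h,f\rangle_\Hil\neq0$; for this $f$, $(\theta_{j_0}-\tilde\Lambda_{j_0})f=\langle h,f\rangle_\Hil c_{j_0}\neq0$, so $\theta_{j_0}\neq\tilde\Lambda_{j_0}$.

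The only delicate point is the first step: one must know that $\mathrm{Range}(T_\Lc)$ is closed (which uses the $g$-frame lower bound) in order to produce the nonzero vector $\underline c$ killed by the synthesis operator — and this is precisely where the hypothesis "not a $g$-Riesz basis" is used, through Proposition~\ref{gr2}. Everything else is routine bookkeeping, the only care needed being the conjugate-linearity convention when computing $V^\dagger$.
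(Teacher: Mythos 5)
Your proof is correct and follows essentially the same route as the paper's: both use Proposition~\ref{gr2} to produce a nonzero vector in $(\mathrm{Range}(T_\Lc))^\perp$ and then add to the canonical dual a rank-one perturbation built from that vector, checking Besselness, the duality identity via orthogonality to $\mathrm{Range}(T_\Lc)$, and nontriviality of the perturbation. The only (harmless, in fact slightly cleaner) difference is that you implement the rank-one functional directly as $\langle h,\cdot\rangle_\Hil$ with $h\in\Hil$, whereas the paper factors it through an auxiliary bounded invertible operator $U:\Hil\rightarrow\hat\Hil$, and you make explicit the closedness of $\mathrm{Range}(T_\Lc)$ and the appeal to Lemma~\ref{dual}, which the paper leaves implicit.
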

\begin{proof}
Let  $\Lc$ be such a $g$-frame. Then by Proposition \ref{gr2}, $Range (T_{\Lc})\neq\hat\Hil$. Hence there exists a non zero ${\underline F}\in (Range (T_{\Lc}))^{\perp}$. Of course we can always assume that $\|{\underline F}\|_{\hat\Hil}=1.$ We use ${\underline F}$ to define a family of bounded operators
$Q_j:\hat\Hil\rightarrow\tilde\Hil_j$ as follows:
$$ Q_j({\underline G})=\langle {\underline F},{\underline G}\rangle_{\hat\Hil} F_j\in \tilde\Hil_j,$$ for all
$j\in J.$ The sequence $\{Q_j\in B(\hat\Hil,\tilde\Hil_j),\,j\in J\}$ is a $g$-Bessel family of  $(\hat\Hil,\{\tilde\Hil_j\})$. Indeed, recalling that $\|{\underline F}\|_{\hat\Hil}^2=\sum_{i\in J}\|F_i\|_{\tilde\Hil_i}=1$, we have
$$\sum_{i\in J}\|Q_i({\underline G})\|^2=\sum_{i\in J}|\langle {\underline F},{\underline G}\rangle_{\hat\Hil} |^2\|F_i\|_{\tilde\Hil_i}^2\leq\|{\underline G}\|_{\hat\Hil}\,,\quad {\underline G}\in \hat\Hil.$$
Let now $U:\Hil\rightarrow \hat\Hil$ be a bounded invertible linear operator. Then
$\{Q_jU\in B(\Hil,\tilde\Hil_j),\,j\in J\}$ is a $g$-Bessel family of $(\Hil,\{\tilde\Hil_j\})$.

Let $f,g\in\Hil.$ Since ${\underline F}\in (Range (T_{\Lc}))^{\perp}$, we find that
$$
\sum_{i\in J}\langle Q_iUf,\Lambda_i g\rangle_{\tilde\Hil_i}=\sum_{i\in J}\left\langle\langle {\underline F},Uf\rangle _{\hat\Hil} F_i,\Lambda_i g \right\rangle_{\tilde\Hil_i}=
\langle Uf, {\underline F}\rangle_{\hat\Hil}\sum_{i\in J} \langle F_i,\Lambda_i g\rangle_{\tilde\Hil_i}=
$$
$$
=\langle Uf, {\underline F}\rangle_{\hat\Hil} \langle {\underline F}, T_\Lc g\rangle_{\hat\Hil}=0.
$$
 Calling as usual $\tilde\Lc=\{\widetilde{\Lambda}_{j}\in B(\Hil,\tilde\Hil_j), \,j\in J\}$ the canonical dual of $\Lc$, this implies that, for $f,g\in\Hil$,
$$
\sum_{i\in J}\langle (\widetilde{\Lambda}_{i}+Q_iU)f,\Lambda_i g\rangle_{\tilde\Hil_i}=\sum_{i\in J}\langle \widetilde{\Lambda}_{i}f ,\Lambda_i g \rangle_{\tilde\Hil_i}=\sum_{i\in J}\langle \Lambda^\dagger_{i}\widetilde{\Lambda}_{i}f , g \rangle_\Hil=\langle f,g \rangle_\Hil.
$$
Therefore the set $\Gamma:=\{\widetilde{\Lambda}_{j}+Q_jU\in B(\Hil,\tilde\Hil_j), \,j\in J\}$  is a  dual of $\Lc$.
To check that $\Gamma\neq\tilde\Lc$ it is enough to see that, taking $h=U^{-1}{\underline F}$, $(Q_iU)(h)=Q_i(\underline F)=\langle \underline F, {\underline F}\rangle_{\hat\Hil}F_i=F_i$. Since $\underline F\neq 0$, $F_i$ is different from 0 for some $i\in J$. Hence $Q_iU$ is not identically zero.

\end{proof}

\begin{thm}
Let $\Theta=\{\theta_{i}\in B(\Hil,\tilde\Hil_{j}),\,j\in J\}$ be a $g$-frame of $(\Hil,\{\tilde\Hil_j\})$, which is dual of a $g$-frame $\Lc=\{\Lambda_j\in B(\Hil,\tilde\Hil_j), \,j\in J\}$. $\Theta$ is the canonical dual of $\Lc$, $\tilde\Lc$, if
and only if $T_{\Theta}^\dagger T_{\Theta}=T^\dagger_{\Theta}T_{\Gamma}$ for all duals $\Gamma$ of $\Lc$.
\end{thm}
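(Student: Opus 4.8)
The plan is to mimic the structure of the proof of Proposition 11, using the orthogonal‐decomposition identity \eqref{3add1} together with the characterization that $\Theta=\tilde\Lc$ exactly when $T_\Theta f=T_{\tilde\Lc}f$ for all $f$. First I would handle the forward direction: assume $\Theta=\tilde\Lc$ and let $\Gamma$ be an arbitrary dual $g$-frame of $\Lc$. From the computation in Proposition 11 we already know $\mathrm{Range}(T_\Gamma-T_{\tilde\Lc})\perp\mathrm{Range}(T_{\tilde\Lc})$, i.e.\ for all $f,g\in\Hil$ one has $\langle T_\Gamma f-T_{\tilde\Lc}f,\,T_{\tilde\Lc}g\rangle_{\hat\Hil}=0$. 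Since $T_\Theta=T_{\tilde\Lc}$, this says $\langle (T_\Gamma-T_\Theta)f,\,T_\Theta g\rangle_{\hat\Hil}=0$, hence $\langle T_\Theta^\dagger(T_\Gamma-T_\Theta)f,\,g\rangle_\Hil=0$ for all $f,g$, which gives $T_\Theta^\dagger T_\Gamma=T_\Theta^\dagger T_\Theta$, as required.

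For the converse, suppose $T_\Theta^\dagger T_\Theta=T_\Theta^\dagger T_\Gamma$ for every dual $g$-frame $\Gamma$ of $\Lc$. The idea is to apply this hypothesis with the specific choice $\Gamma=\tilde\Lc$, which is a legitimate dual of $\Lc$. Then $T_\Theta^\dagger T_\Theta=T_\Theta^\dagger T_{\tilde\Lc}$, so $T_\Theta^\dagger(T_\Theta-T_{\tilde\Lc})=0$, and therefore $\langle T_\Theta^\dagger(T_\Theta-T_{\tilde\Lc})f,\,f\rangle_\Hil=0$, i.e.\ $\langle (T_\Theta-T_{\tilde\Lc})f,\,T_\Theta f\rangle_{\hat\Hil}=0$ for all $f\in\Hil$. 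Now I would feed this orthogonality into the Pythagorean identity: the same argument that produced \eqref{3add1} shows
\be
\|T_{\tilde\Lc}f\|_{\hat\Hil}^2=\|T_{\tilde\Lc}f-T_\Theta f\|_{\hat\Hil}^2+\|T_\Theta f\|_{\hat\Hil}^2,
\en
but this by itself only bounds $\|T_\Theta f\|\le\|T_{\tilde\Lc}f\|$. To get equality I would also invoke inequality \eqref{30} (valid for every dual $\Gamma$ of $\Lc$, in particular $\Gamma=\Theta$), which gives $\|T_{\tilde\Lc}f\|_{\hat\Hil}\le\|T_\Theta f\|_{\hat\Hil}$. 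Combining, $\|T_{\tilde\Lc}f\|_{\hat\Hil}=\|T_\Theta f\|_{\hat\Hil}$ for all $f$, and then the displayed Pythagorean identity forces $\|T_{\tilde\Lc}f-T_\Theta f\|_{\hat\Hil}=0$, hence $T_\Theta f=T_{\tilde\Lc}f$ for all $f\in\Hil$, i.e.\ $\Theta=\tilde\Lc$.

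The only genuinely delicate point is making sure that in the converse direction the hypothesis can in fact be applied to $\Gamma=\tilde\Lc$ — that is, that $\tilde\Lc$ counts as "a dual of $\Lc$" in the sense intended; this is immediate from \eqref{29}. After that, everything is assembled from results already in the paper: the orthogonality computation of Proposition 11, the identity \eqref{3add1}, and the minimality inequality \eqref{30}. No new estimates or new machinery are needed, so I do not anticipate a real obstacle beyond bookkeeping of which space each inner product lives in.
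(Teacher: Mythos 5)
Your argument is correct, and while your forward direction coincides with the paper's (both rest on the orthogonality $\langle (T_\Gamma-T_{\tilde\Lc})f,T_{\tilde\Lc}g\rangle_{\hat\Hil}=0$ from Proposition 11 together with Proposition 10), your converse takes a genuinely different and considerably shorter route. The paper's converse splits into cases: if $\Lc$ is a g-Riesz basis it invokes uniqueness of the dual (Proposition 12), and otherwise it re-runs the whole construction of Theorem 14 --- a unit vector $\underline F\in(Range(T_\Lc))^{\perp}$, the operators $Q_j$, and the perturbed dual $\{\theta_j+Q_jU\}$ --- to show first that $(Range(T_\Lc))^\perp=(Range(T_\Theta))^\perp$, then that $\Theta$ and $\Lc$ are similar, and finally that the similarity operator must be $S_\Lc^{-1}$. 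You instead apply the hypothesis to the single dual $\Gamma=\tilde\Lc$, extract the orthogonality $\langle (T_\Theta-T_{\tilde\Lc})f,T_\Theta f\rangle_{\hat\Hil}=0$, and combine the resulting Pythagorean identity with the minimality inequality (\ref{30}) (applied to the dual $\Theta$) to force $T_\Theta f=T_{\tilde\Lc}f$ for all $f$. This buys two things: it shows the condition need only be tested against $\Gamma=\tilde\Lc$, so you in fact prove a formally stronger statement, and it eliminates both the case distinction and the auxiliary construction; the paper's longer argument, in exchange, produces the extra structural information that $Range(T_\Theta)=Range(T_\Lc)$ along the way. All the ingredients you use ((\ref{3add1}), (\ref{30}), the fact that $\tilde\Lc$ is a dual g-frame by (\ref{29})) are established before this theorem, so there is no circularity.
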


\begin{proof}
Let us first assume that $\Theta=\tilde\Lc$, and let $\Gamma$ be any dual for $\Lc$. Then $\langle (T_\Theta-T_\Gamma)f, T_\Lc g\rangle_{\hat\Hil}= \langle f,  g\rangle_{\Hil}- \langle f,  g\rangle_{\Hil}=0$. Since $\Theta(=\tilde\Lc)$ and $\Lc$ are similar then $Range(T_\Lc)=Range(T_\Theta)$. Therefore $\langle (T_\Theta-T_\Gamma)f, T_\Theta h\rangle_{\hat\Hil}= 0$ for all $h\in\Hil$, and $\langle T_\Theta^\dagger(T_\Theta-T_\Gamma)f,  h\rangle_{\Hil}= 0$, for all $f,g\in\Hil$. Therefore $T_{\Theta}^\dagger T_{\Theta}=T^\dagger_{\Theta}T_{\Gamma}$.

Viceversa, let $T_{\Theta}^\dagger T_{\Theta}=T^\dagger_{\Theta}T_{\Gamma}$ for all duals $\Gamma$ of $\Lc$. First we observe that if $\Lc$ is a g-Riesz basis, then its dual is unique. Hence $\Theta=\tilde\Lc$.

Suppose rather that $\Lc$ is not a g-Riesz basis. The, by Proposition 13, $Range (T_{\Lc})\subset\hat\Hil$. Therefore we can proceed as in the proof of Theorem 14: we introduce a normalized vector $\underline F\in\hat\Hil$, $\underline F\in(Range (T_{\Lc}))^\perp$. Then we define the same operators $Q_j$ and a g-Bessel family $\Gamma:=\{\theta_{j}+Q_jU\in B(\Hil,\tilde\Hil_j), \,j\in J\}$, where $U$ is as before. Since $\Theta$ is a dual g-frame of $\Lc$, $\Gamma$ is also a dual g-frame of $\Lc$. Then $(T_\Gamma-T_\Theta)f=\{Q_jUf\}_{j\in J}$ for all $f\in\Hil$. If we take in particular $f=U^{-1}{\underline F}$, then $\{Q_jUf\}_{j\in J}={\underline F}$. Moreover, using our assumption and the previous equality, we find $T_\Theta^\dagger(\underline F)=0$, so that $\underline F\in \ker(T_\Theta^\dagger)=(Range (T_{\Theta}))^\perp$.
Hence $(Range (T_{\Lc}))^\perp\subseteq (Range (T_{\Theta}))^\perp$. Suppose now that the inclusion is strict. Hence we have a non zero $\underline f\in Range (T_{\Theta}))^\perp$ which does not belong to $Range (T_{\Lc}))^\perp$. Then $\underline f=T_\Lc h$ for some $h\in\Hil$ and $\left<\underline f, T_\Theta g\right>_{\hat\Hil}=0$ for all $g\in\Hil$. If in particular we take $g=h$ then we conclude that $0=\left<\underline f, T_\Theta h\right>_{\hat\Hil}=\|h\|^2_\Hil$, so that $h=0$ and $\underline f=0$, which is against the original assumption. Hence $(Range (T_{\Lc}))^\perp = (Range (T_{\Theta}))^\perp$ and, as a consequence of Proposition 10, $\Lc$ and $\Theta$ are similar. Therefore $\Lambda_j=\theta_j X$ for a certain invertible operator $X\in B(\Hil)$. But also $\Lc$ and $\tilde \Lc$ are similar since $\tilde\Lambda_j=\Lambda_j S_\Lc^{-1}$.
Furthermore, since  $\Theta$ and $\tilde\Lc$ are both duals of $\Lc$, we deduce that $\left<T_\Theta f-T_{\tilde\Lc}f,T_\Lc g\right>_{\hat\Hil}=0$ for all $f$ and $g$ in $\Hil$. Then, if we take $g=(X-S_\Lc^{-1})f$, we find that $0=\left<T_\Theta f-T_{\tilde\Lc}f,T_\Lc (X-S_\Lc^{-1})f\right>_{\hat\Hil}=\sum_{i\in J}\|\Lambda_i(X-S_\Lc^{-1})f\|_{\tilde\Hil_i}^2$, so that $\Lambda_i(X-S_\Lc^{-1})f=0$ for all $i\in J$ and, consequently, that $\sum_{i\in J}\tilde \Lambda_i^\dagger\Lambda_i(X-S_\Lc^{-1})f=(X-S_\Lc^{-1})f=0$, for all $f\in\Hil$. Hence $X=S_\Lc^{-1}$, and $\tilde\Lc=\Theta$.

\end{proof}

\begin{prop}
let $\Lc=\{\Lambda_j\in B(\Hil,\tilde\Hil_j), \,j\in J\}$ be a $g$-frame of $(\Hil,\{\tilde\Hil_j\})$ and let $\Theta=\{\theta_{j}\in B(\Hil,\tilde\Hil_{j}),\,j\in J\}$ be a sequence of bounded operators. The following
are equivalent:
\begin{enumerate}
\item[(1)] $\Theta$ is a $g$-frame of $(\Hil,\{\tilde\Hil_j\})$.
\item[(2)] There is a constant $M>0$ so that, for all $f\in\Hil$, we have
\be\sum_{i\in J}\|\Lambda_{i} f-\theta_{i}f\|_{\tilde\Hil_i}^2\leq M\: \min \left(\sum_{i\in J}\|\Lambda_{i} f\|_{\tilde\Hil_i}^2,\sum_{i\in J}\|\theta_{i} f\|_{\tilde\Hil_i}^2\right).\label{3add2}\en
\end{enumerate}
Moreover, (2) implies (3) below and, if $\Theta$ is a $g$-Bessel family and (3) holds, then $\Theta$ is also a g-frame.
\begin{enumerate}
\item[(3)] There is a constant $M>0$ such that for all $f\in\Hil$ we have
$$\sum_{i\in J}\|\Lambda_{i} f-\theta_{i}f\|_{\tilde\Hil_i}^2\leq M\sum_{i\in J}\|\theta_{i} f\|_{\tilde\Hil_i}^2$$
\end{enumerate}
\end{prop}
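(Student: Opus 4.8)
The plan is to recast everything through the analysis operator. Writing $\|T_\Lc f\|_{\hat\Hil}^2=\sum_{i\in J}\|\Lambda_i f\|_{\tilde\Hil_i}^2$ and, when the sum is finite, $\|T_\Theta f\|_{\hat\Hil}^2=\sum_{i\in J}\|\theta_i f\|_{\tilde\Hil_i}^2$, statement (1) becomes the existence of constants $0<C\le D<\infty$ with $C\|f\|_\Hil^2\le\|T_\Theta f\|_{\hat\Hil}^2\le D\|f\|_\Hil^2$, while the left-hand side of (2) and (3) is exactly $\|T_\Lc f-T_\Theta f\|_{\hat\Hil}^2$. Since $\Lc$ is a $g$-frame we already have $A\|f\|_\Hil^2\le\|T_\Lc f\|_{\hat\Hil}^2\le B\|f\|_\Hil^2$. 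In each case under consideration ($\Theta$ a $g$-frame, or (2), or (3) together with the Bessel bound) the sequences $\{\Lambda_i f\}_i$, $\{\theta_i f\}_i$ and $\{\Lambda_i f-\theta_i f\}_i$ all lie in $\hat\Hil$, so the triangle inequality in $\hat\Hil$ is available; the whole proposition is then elementary manipulation of that inequality.

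For $(1)\then(2)$: from the frame inequalities for $\Lc$ and $\Theta$ one gets $\|T_\Theta f\|_{\hat\Hil}^2\le\frac{D}{A}\|T_\Lc f\|_{\hat\Hil}^2$ and $\|T_\Lc f\|_{\hat\Hil}^2\le\frac{B}{C}\|T_\Theta f\|_{\hat\Hil}^2$. Feeding these into $\|T_\Lc f-T_\Theta f\|_{\hat\Hil}^2\le 2\|T_\Lc f\|_{\hat\Hil}^2+2\|T_\Theta f\|_{\hat\Hil}^2$ yields both $\|T_\Lc f-T_\Theta f\|_{\hat\Hil}^2\le 2\bigl(1+\tfrac{D}{A}\bigr)\|T_\Lc f\|_{\hat\Hil}^2$ and $\|T_\Lc f-T_\Theta f\|_{\hat\Hil}^2\le 2\bigl(1+\tfrac{B}{C}\bigr)\|T_\Theta f\|_{\hat\Hil}^2$; taking $M$ to be the larger of the two constants gives (2).

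For $(2)\then(1)$: (2) gives in particular $\|T_\Lc f-T_\Theta f\|_{\hat\Hil}\le\sqrt{M}\,\|T_\Lc f\|_{\hat\Hil}$ and $\|T_\Lc f-T_\Theta f\|_{\hat\Hil}\le\sqrt{M}\,\|T_\Theta f\|_{\hat\Hil}$. The first, combined with $\|T_\Theta f\|_{\hat\Hil}\le\|T_\Lc f\|_{\hat\Hil}+\|T_\Lc f-T_\Theta f\|_{\hat\Hil}$, gives $\|T_\Theta f\|_{\hat\Hil}^2\le(1+\sqrt M)^2B\|f\|_\Hil^2$, the upper bound. The second, combined with $\|T_\Lc f\|_{\hat\Hil}\le\|T_\Theta f\|_{\hat\Hil}+\|T_\Lc f-T_\Theta f\|_{\hat\Hil}$, gives $\|T_\Lc f\|_{\hat\Hil}\le(1+\sqrt M)\|T_\Theta f\|_{\hat\Hil}$, hence $\|T_\Theta f\|_{\hat\Hil}^2\ge\frac{A}{(1+\sqrt M)^2}\|f\|_\Hil^2$, the lower bound; so $\Theta$ is a $g$-frame. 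The implication $(2)\then(3)$ is immediate from $\min(a,b)\le b$. Finally, if (3) holds and $\Theta$ is a $g$-Bessel family, the Bessel constant supplies the upper bound directly, and the lower-bound computation just performed used only the $\|T_\Theta f\|_{\hat\Hil}$-side inequality, so it still applies; hence $\Theta$ is a $g$-frame.

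There is essentially no hard point here; the only thing to keep straight is the bookkeeping of which half of the $\min$ in (2) controls which frame bound — the $\|T_\Lc f\|_{\hat\Hil}$-side yields the upper (Bessel) bound for $\Theta$ and the $\|T_\Theta f\|_{\hat\Hil}$-side yields the lower bound — which simultaneously explains why (3) alone controls only the lower bound and therefore why the extra $g$-Bessel hypothesis is exactly what is missing in the last assertion.
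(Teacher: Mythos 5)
Your proof is correct and follows essentially the same route as the paper's: both arguments are elementary perturbation estimates in $\hat\Hil$ combining the frame bounds of $\Lc$ with the inequality in (2)/(3), the only difference being that you apply the triangle inequality to norms and then square (yielding constants like $(1+\sqrt M)^2$) where the paper uses $\|a-b\|^2\leq 2\|a\|^2+2\|b\|^2$ directly (yielding $2(M+1)$ and $4(M+1)^2$). Your closing observation that the lower-bound computation uses only the $\Theta$-side estimate, which is exactly why (3) plus the Bessel hypothesis suffices, matches the paper's final step.
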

\begin{proof}
$(1)\Rightarrow (2)$: Let $A$ and $B$ be the $g$-frame bounds for $\Lc$, and $C$ and $D$ be the $g$-frame bounds for $\Theta.$ Then, for all $f\in\Hil$
$$
\sum_{i\in J}\|\Lambda_{i} f-\theta_{i}f\|_{\tilde\Hil_i}^2 \leq2\sum_{i\in J}\|\Lambda_{i} f\|_{\tilde\Hil_i}^2+2\sum_{i\in J}\|\theta_{i}f\|_{\tilde\Hil_i}^2
\leq 2\sum_{i\in J}\|\Lambda_{i} f\|_{\tilde\Hil_i}^2+2D\|f\|_\Hil^2$$
$$\leq 2\sum_{i\in J}\|\Lambda_{i} f\|_{\tilde\Hil_i}^2+2\frac{D}{A}\sum_{i\in J}\|\Lambda_{i} f\|_{\tilde\Hil_i}^2
\leq 2\left(1+\frac{D}{A}\right)\sum_{i\in J}\|\Lambda_{i} f\|_{\tilde\Hil_i}^2.
$$
Since $\Theta$ is $g$-frame by the same argument we have
$$\sum_{i\in J}\|\Lambda_{i} f-\theta_{i}f\|_{\tilde\Hil_i}^2\leq 2\left(1+\frac{B}{C}\right)\sum_{i\in J}\|\theta_{i} f\|_{\tilde\Hil_i}^2.$$
Hence (\ref{3add2}) follows.

$(2)\Rightarrow (1):$ For $M$ given in (2) and any $f\in\Hil$ we have, recalling that $\Lc$ is a g-frame with bounds $A$ and $B$,
$$
A\|f\|_{\Hil}^2 \leq\sum_{i\in J}\|\Lambda_{i} f\|_{\tilde\Hil_i}^2
\leq 2\sum_{i\in J}\|\Lambda_{i} f-\theta_{i}f\|_{\tilde\Hil_i}^2+2\sum_{i\in J}\|\theta_{i}f\|_{\tilde\Hil_i}^2
\leq 2\left(M\sum_{i\in J}\|\theta_{i} f\|_{\tilde\Hil_i}^2+\sum_{i\in J}\|\theta_{i}f\|_{\tilde\Hil_i}^2\right)$$
$$=2(M+1)\sum_{i\in J}\|\theta_{i}f\|_{\tilde\Hil_i}^2
\leq 4(M+1)\left(\sum_{i\in J}\|\Lambda_{i} f-\theta_{i}f\|_{\tilde\Hil_i}^2+\sum_{i\in J}\|\Lambda_{i}f\|_{\tilde\Hil_i}^2\right)$$
$$
\leq 4(M+1)\left(M\sum_{i\in J}\|\Lambda_{i}f\|_{\tilde\Hil_i}^2+\sum_{i\in J}\|\Lambda_{i}f\|_{\tilde\Hil_i}^2\right)
\leq 4(M+1)^2\sum_{i\in J}\|\Lambda_{i}f\|_{\tilde\Hil_i}^2
\leq 4(M+1)^2B\|f\|_{\Hil}^2.
$$
Then it follows that $$\frac{A}{2M+2}\|f\|_{\Hil}^2\leq\sum_{i\in J}\|\theta_{i}f\|_{\tilde\Hil_i}^2\leq 2B(M+1)\|f\|_{\Hil}^2,$$
which means that $\Theta$ is a g-frame of $(\Hil,\{\tilde\Hil_j\})$.

It is clear that (2) implies (3).

Assume now that $\Theta$ is a $g$-Bessel family and that (3) holds. Then $\Theta$ has lower $g$-frame bound. Indeed we find, taking
$f\in\Hil$, and with similar estimates as those in the proof of implication $(2)\Rightarrow (1)$, $$\frac{A}{2M+2}\|f\|_{\Hil}^2\leq\sum_{i\in J}\|\theta_{i}f\|_{\tilde\Hil_i}^2.$$
Hence $\Theta$ is a g-frame of $(\Hil,\{\tilde\Hil_j\})$.
\end{proof}

The equivalence (1) $\Leftrightarrow$ (2) is also proved in  \cite{yao}. In two next propositions, we generalized the results of Gavruta \cite{ga} for fusion frames to $g$-frames.
\begin{prop}\label{g1}
Assume that $\Lc=\{\Lambda_j\in B(\Hil,\tilde\Hil_j), \,j\in J\}$  and $\Theta=\{\theta_{j}\in B(\Hil,\tilde\Hil_{j}),\,j\in J\}$ are $g$-Bessel families of $(\Hil,\{\tilde\Hil_j\})$ with $g$-Bessel bounds $B_1,B_2$ respectively, and that there exist $m<1,n>-1$
such that $$\|f-\sum_{i\in J}\Lambda_{i}^\dagger\theta_{i} f\|_\Hil\leq m\|f\|_\Hil+n\|\sum_{i\in J}\Lambda_{i}^\dagger\theta_{i}f\|_\Hil,\: f\in\Hil$$ then $\Theta$
is a $g$-frame for $\Hil$.
\end{prop}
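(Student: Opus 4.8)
The plan is to run the standard perturbation argument, controlling the operator $S := \sum_{i\in J}\Lambda_i^\dagger\theta_i$ from both sides under the hypothesis $\|f - Sf\|_\Hil \leq m\|f\|_\Hil + n\|Sf\|_\Hil$. First I would observe that $S$ is a well-defined bounded operator on $\Hil$: since $\Lc$ and $\Theta$ are $g$-Bessel with bounds $B_1, B_2$, the series $\sum_i \Lambda_i^\dagger\theta_i f$ converges and $\|Sf\|_\Hil = \sup_{\|g\|_\Hil=1}|\langle Sf,g\rangle_\Hil| = \sup_{\|g\|_\Hil=1}|\sum_i\langle\theta_if,\Lambda_ig\rangle_{\tilde\Hil_i}| \leq \sqrt{B_1B_2}\,\|f\|_\Hil$ by Cauchy--Schwarz in $\hat\Hil$. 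Next, from the hypothesis I would extract a two-sided comparison between $\|f\|_\Hil$ and $\|Sf\|_\Hil$: the triangle inequality gives $\|Sf\|_\Hil \geq \|f\|_\Hil - \|f-Sf\|_\Hil \geq (1-m)\|f\|_\Hil - n\|Sf\|_\Hil$, hence $(1+n)\|Sf\|_\Hil \geq (1-m)\|f\|_\Hil$, i.e. $\|Sf\|_\Hil \geq \frac{1-m}{1+n}\|f\|_\Hil$ (note $1+n>0$ and $1-m>0$ by the standing assumptions $m<1$, $n>-1$, so the constant is strictly positive). This already shows $S$ is bounded below, hence injective with closed range.

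The core step is then the lower $g$-frame bound for $\Theta$. Using again Cauchy--Schwarz in $\hat\Hil$ applied to the pair $T_\Lc g = \{\Lambda_i g\}$ and $T_\Theta f = \{\theta_i f\}$,
\begin{equation*}
|\langle Sf, g\rangle_\Hil| = \Bigl|\sum_{i\in J}\langle\theta_i f,\Lambda_i g\rangle_{\tilde\Hil_i}\Bigr| \leq \Bigl(\sum_{i\in J}\|\theta_i f\|_{\tilde\Hil_i}^2\Bigr)^{1/2}\Bigl(\sum_{i\in J}\|\Lambda_i g\|_{\tilde\Hil_i}^2\Bigr)^{1/2} \leq \sqrt{B_1}\,\|g\|_\Hil\Bigl(\sum_{i\in J}\|\theta_i f\|_{\tilde\Hil_i}^2\Bigr)^{1/2}.
\end{equation*}
Taking the supremum over $\|g\|_\Hil = 1$ yields $\|Sf\|_\Hil \leq \sqrt{B_1}\,\bigl(\sum_{i\in J}\|\theta_i f\|_{\tilde\Hil_i}^2\bigr)^{1/2}$. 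Combining with the lower bound on $\|Sf\|_\Hil$ from the previous paragraph gives
\begin{equation*}
\sum_{i\in J}\|\theta_i f\|_{\tilde\Hil_i}^2 \geq \frac{1}{B_1}\|Sf\|_\Hil^2 \geq \frac{1}{B_1}\Bigl(\frac{1-m}{1+n}\Bigr)^2\|f\|_\Hil^2,
\end{equation*}
which is the lower $g$-frame inequality with constant $A' = \frac{(1-m)^2}{B_1(1+n)^2} > 0$. The upper inequality is just the $g$-Bessel hypothesis on $\Theta$ with bound $B_2$, so $\Theta$ satisfies Definition \ref{def1} and is a $g$-frame for $\Hil$.

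The only point that needs a little care — and the main (mild) obstacle — is handling the constant $n$ when it is negative: one must make sure the rearrangement $(1+n)\|Sf\|_\Hil \geq (1-m)\|f\|_\Hil$ uses $1+n>0$ correctly and does not accidentally flip an inequality, and one should also note that the estimate is vacuous-but-harmless when $f=0$. Everything else is a routine chain of triangle inequalities and a single Cauchy--Schwarz estimate in $\hat\Hil$; no new structural input beyond the $g$-Bessel bounds and the defining inequality of Definition \ref{def1} is required. (If one instead wants to avoid the supremum-over-$g$ argument, one can equally take $g = Sf$ directly in the Cauchy--Schwarz line, obtaining $\|Sf\|_\Hil^2 \leq \sqrt{B_1}\,\|Sf\|_\Hil\bigl(\sum_i\|\theta_i f\|_{\tilde\Hil_i}^2\bigr)^{1/2}$ and dividing by $\|Sf\|_\Hil$; this is the cleaner route and I would use it in the write-up.)
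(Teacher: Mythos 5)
Your proposal is correct and follows essentially the same route as the paper's own proof: you define the same operator $S=\sum_{i\in J}\Lambda_i^\dagger\theta_i$ (called $V$ in the paper), obtain $\|Sf\|_\Hil\leq\sqrt{B_1}\bigl(\sum_{i\in J}\|\theta_i f\|_{\tilde\Hil_i}^2\bigr)^{1/2}$ by the identical Cauchy--Schwarz estimate, derive $\|Sf\|_\Hil\geq\frac{1-m}{1+n}\|f\|_\Hil$ from the same triangle-inequality rearrangement, and combine them to get the same lower bound $\frac{1}{B_1}\bigl(\frac{1-m}{1+n}\bigr)^2$. Your explicit attention to the sign of $1+n$ and the remark on taking $g=Sf$ are fine refinements but do not change the argument.
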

\begin{proof}
Let us first define the operator $V:\Hil\rightarrow\Hil$ by $Vf=\sum_{i\in J}\Lambda_{i}^\dagger\theta_{i} f.$ We have
\begin{equation}\label{b2}
\|Vf\|_\Hil=sup_{\|g\|\leq 1} |\langle V\,f,g\rangle_\Hil| \leq \sup_{\,\|g\|\leq 1} \left|\langle \sum_{i\in J}\Lambda_{i}^\dagger\theta_{i} f,g\rangle_\Hil\right|
\end{equation}
\begin{equation*}
\begin{aligned}
&\leq \sup_{\|g\|\leq 1}\left(\sum_{i\in J}\|\Lambda_{i}g\|_{\tilde\Hil_i}^2 \right)^\frac{1}{2}\left(\sum_{i\in J}\|\theta_{i}f\|_{\tilde\Hil_i}^2 \right)^\frac{1}{2}
\\
&\leq \sqrt{B_1}\left(\sum_{i\in J}\|\theta_{i}f\|_{\tilde\Hil_i}^2 \right)^\frac{1}{2}
\leq \sqrt{B_1}\sqrt{B_2}\|f\|_\Hil.
\end{aligned}
\end{equation*}
It follows that $V$ is well defined and bounded by $\sqrt{B_1B_2}$. On the other hand, we have
$$\|f\|_\Hil-\|Vf\|_\Hil\leq\|f-Vf\|_\Hil\leq m\|f\|_\Hil+n\|Vf\|_\Hil,\quad f\in\Hil,$$ and so $$\|Vf\|_\Hil\geq \frac{1-m}{1+n}\|f\|_\Hil.$$
Hence, using this and inequality (\ref{b2}), we obtain $$\sum_{i\in J}\|\theta_{i}f\|_{\tilde\Hil_i}^2 \geq\frac{1}{B_1}\left(\frac{1-m}{1+n}\right)^2\|f\|_\Hil^2.$$
\end{proof}

\begin{prop}
Let $\Lc=\{\Lambda_j\in B(\Hil,\tilde\Hil_j), \,j\in J\}$  and $\Theta=\{\theta_{j}\in B(\Hil,\tilde\Hil_{j}),\,j\in J\}$ be two $g$-Bessel families of $(\Hil,\{\tilde\Hil_j\})$ with $g$-Bessel bounds $B_1$ and $B_2$ respectively. Suppose that there exists $0\leq m<1,$
such that $$\|f-\sum_{i\in J}\Lambda_{i}^\dagger\theta_{i} f\|_\Hil\leq m\|f\|_\Hil,\: f\in\Hil.$$ Then $\Theta$ and
$\Lc$ are $g$-frames.
\end{prop}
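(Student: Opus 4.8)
The plan is to deduce both conclusions from the previous Proposition~\ref{g1}, applied once to the ordered pair $(\Lc,\Theta)$ and once to the ordered pair $(\Theta,\Lc)$.

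First I would introduce the operator $V:\Hil\to\Hil$, $Vf=\sum_{i\in J}\Lambda_i^\dagger\theta_i f=T_\Lc^\dagger T_\Theta f$, which is well defined and bounded by $\sqrt{B_1B_2}$ exactly as in the proof of Proposition~\ref{g1} (here $T_\Theta$ is bounded by $\sqrt{B_2}$ and $T_\Lc^\dagger$ by $\sqrt{B_1}$ because $\Lc,\Theta$ are $g$-Bessel). The hypothesis then reads $\|(\1_\Hil-V)f\|_\Hil\le m\|f\|_\Hil$ for every $f\in\Hil$, i.e.\ $\|\1_\Hil-V\|_{B(\Hil)}\le m<1$. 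Taking $n=0$ in Proposition~\ref{g1}, whose assumption $\|f-Vf\|_\Hil\le m\|f\|_\Hil+n\|Vf\|_\Hil$ is then precisely ours, we conclude at once that $\Theta$ is a $g$-frame of $(\Hil,\{\tilde\Hil_j\})$; the computation in that proof even gives the explicit lower bound $\sum_{i\in J}\|\theta_i f\|_{\tilde\Hil_i}^2\ge \frac{1}{B_1}(1-m)^2\|f\|_\Hil^2$.

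It remains to see that $\Lc$ is a $g$-frame as well. Here I would pass to the adjoint of $V$: since $T_\Lc^\dagger$ and $T_\Theta$ are bounded, $V^\dagger=(T_\Lc^\dagger T_\Theta)^\dagger=T_\Theta^\dagger T_\Lc$, and using the explicit form (\ref{25}) of the synthesis operator together with $T_\Lc f=\{\Lambda_i f\}_{i\in J}\in\hat\Hil$, we get $V^\dagger f=T_\Theta^\dagger\{\Lambda_i f\}_{i\in J}=\sum_{i\in J}\theta_i^\dagger\Lambda_i f$. Because $\|\1_\Hil-V^\dagger\|_{B(\Hil)}=\|(\1_\Hil-V)^\dagger\|_{B(\Hil)}=\|\1_\Hil-V\|_{B(\Hil)}\le m$, this yields $\|f-\sum_{i\in J}\theta_i^\dagger\Lambda_i f\|_\Hil\le m\|f\|_\Hil$ for all $f\in\Hil$. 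Now apply Proposition~\ref{g1} a second time, with the roles of $\Lc$ and $\Theta$ interchanged (so the $g$-Bessel bounds are $B_2$ for $\Theta$ and $B_1$ for $\Lc$) and again with $n=0$: its hypothesis for the pair $(\Theta,\Lc)$ is exactly the inequality just established, so its conclusion is that $\Lc$ is a $g$-frame, with lower bound $\frac{1}{B_2}(1-m)^2$. This finishes the proof.

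The only step needing any care is the identification $(T_\Lc^\dagger T_\Theta)^\dagger=T_\Theta^\dagger T_\Lc=\sum_{i\in J}\theta_i^\dagger\Lambda_i(\cdot)$ together with the isometry-of-the-adjoint identity $\|A\|=\|A^\dagger\|$ for bounded Hilbert-space operators; this is what transfers the hypothesis from the pair $(\Lc,\Theta)$ to the pair $(\Theta,\Lc)$ and lets us reuse Proposition~\ref{g1}. Everything else is bookkeeping. (Alternatively one could avoid citing Proposition~\ref{g1} and simply redo its short perturbation estimate for each of the two pairs, but that is strictly longer and gives nothing new.)
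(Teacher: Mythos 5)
Your proposal is correct and follows essentially the same route as the paper: both deduce the claim for $\Theta$ from Proposition \ref{g1} with $n=0$, and both handle $\Lc$ by passing to the adjoint $V^\dagger=\sum_{i\in J}\theta_i^\dagger\Lambda_i(\cdot)$ and using $\|(\1_\Hil-V)^\dagger\|=\|\1_\Hil-V\|\leq m$ to obtain the lower bound $\frac{1}{B_2}(1-m)^2$. The only cosmetic difference is that you formally re-invoke Proposition \ref{g1} for the swapped pair, whereas the paper re-runs the same two-line estimate inline.
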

\begin{proof}
In Proposition \ref{g1}, let us consider $n=0$. Then $\Theta$ is a $g$-frame with lower bound
$\frac{1}{B_1}(1-m)^2.$ To prove that $\Lc$ is also a g-frame,
we define the operator $W:\Hil\rightarrow\Hil$ by $Wf=\sum_{i\in J}\theta_{i}^\dagger\Lambda_{i} f=V^\dagger f$.
Then $W$ is well defined and, as in the proof of the previous proposition,
\be\label{b3}
\|Wf\|_\Hil=\sup_{\|g\|_\Hil\leq 1} |\langle Wf,g\rangle_\Hil| \leq \sup_{\|g\|_\Hil\leq 1} \left|\langle \sum_{i\in J}\theta_{i}^\dagger\Lambda_{i} f,g\rangle_\Hil\right|
\leq \sqrt{B_2}\left(\sum_{i\in J}\|\Lambda_{i}f\|_{\tilde\Hil_i}^2 \right)^\frac{1}{2}.
\en

Then, for all $f\in\Hil$, recalling that $W=V^\dagger$,
$$
\|f\|_\Hil-\|Wf\|_\Hil\leq\|f-Wf\|_\Hil= \|(I-V)^\dagger f\|_\Hil
\leq m\|f\|_\Hil.
$$
Hence $\|Wf\|_\Hil\geq (m-1)\|f\|_\Hil,$ for all $f\in\Hil$. Now (\ref{b3}) implies that
$$\sum_{i\in J}\|\Lambda_{i}f\|_{\tilde\Hil_i}^2 \geq\frac{1}{B_2}(m-1)^2\|f\|_\Hil^2.$$
\end{proof}

\section{Coherent states and g-sets}

Since the birth of quantum mechanics coherent states have always had a very central role: they are, in fact, the quantum states which are closer to classical ones. This is reflected from the fact that they minimize the well-known Heisenberg uncertainty relation, $\Delta x\,\Delta p =\frac{\hbar}{2}$, where $x$ and $p$ are the position and the momentum operators of a quantum particle. Coherent states are often defined as eigenstates of a certain {\em lowering operator} related to $x$ and $p$, $a=\frac{1}{\sqrt{2}}(a+ip)$, and are useful also because they produce a certain  resolution of the identity, which is a continuous version of (\ref{29}). Many details on coherent states can be found in \cite{ks} or, for more updated references, in \cite{aag} or in \cite{gazeaubook}. It is easy to check, looking at the literature, that there is not an unique way to introduce these states. On the contrary, quite often different authors call {\em coherent} different vectors of a certain Hilbert space, depending on what they are interested in.

We will discuss here how the notion of coherent states can be exported to the present {\em g-setting}, with no particular difficulty. It is not hard to imagine that the procedure we will discuss in this section is not unique, and other interesting alternates may exist.

In what follows we will take $J=K_j={\Bbb N}_0:={\Bbb N}\cup\{0\}$, for all $j\in J$, and $\Theta=\{\theta_j\in B(\Hil,\{\tilde\Hil_j\}), \,j\in {\Bbb N}_0\}$ will be a g-on basis of $(\Hil,\{\tilde\Hil_j\})$. In view of what discussed in Section II this means that, taken a fixed o.n. basis $\E_j=\{e_k^{(j)},\, k\in {\Bbb N}_0\}$ in $\tilde\Hil_j$, the set $\T$ defined as  in (\ref{215}), $\T:=\{t_k^{(j)}:=\theta_j^\dagger\,e_k^{(j)},\, k, j\in {\Bbb N}_0\}$, is an o.n. basis of $\Hil$. This implies that $\theta_j f=\sum_{k=0}^\infty\left<t_k^{(j)},f\right>_\Hil\,e_k^{(j)}$ for all $f\in\Hil$, and that $\theta_j^\dagger \tilde f=\sum_{k=0}^\infty\left<e_k^{(j)},\tilde f\right>_{\tilde\Hil_j}\,t_k^{(j)}$ for all $\tilde f\in\tilde\Hil_j$. Using the o.n. basis $\T$ we can introduce the following vectors of $\Hil$
\be
\Phi_\Theta(z,w):=N(z,w)\,\sum_{k,l=0}^\infty\,\frac{z^k\,w^l}{\sqrt{k!\,l!}}\,\,t_k^{(l)},
\label{51}\en
where $z,w\in\Bbb{C}$ and $N(z,w)$ is a normalization constant. These are a two-dimensional version of standard coherent states. More in particular, it is easy to see that $\Phi_\Theta(z,w)$ is normalized in $\Hil$ if we take $N(z,w)=e^{-(|z|^2+|w|^2)/2}$, and that the double series in (\ref{51}) converge for all $z$ and $w$ in $\Bbb{C}$. Since $t_k^{(j)}:=\theta_j^\dagger\,e_k^{(j)}$ we can rewrite (\ref{51})  as
\be
\Phi_\Theta(z,w)=e^{-|w|^2/2}\,\sum_{l=0}^\infty\,\frac{w^l}{\sqrt{l!}}\,\theta_l^\dagger \chi_l(z), \quad \mbox{where} \quad
\chi_l(z)=e^{-|z|^2/2}\,\sum_{k=0}^\infty\,\frac{z^k}{\sqrt{k!}}\,e_k^{(l)}.
\label{52}\en
$\chi_l(z)$ is normalized  in $\tilde\Hil_l$: $\left<\chi_l(z),\chi_l(z)\right>_{\tilde\Hil_l}=1$ for all $z\in {\Bbb C}$. Actually, $\chi_l(z)$ is a standard coherent state in $\tilde\Hil_l$, \cite{ks}.
Let us further define two operators $a$ and $b$ on $\Hil$ via their action on the o.n. basis $\T$. We define
\be \left\{\begin{array}{lll}
a\, t_k^{(l)}=\sqrt{k}\,t_{k-1}^{(l)},\qquad \forall l\in {\Bbb N}_0,\\
b\, t_k^{(l)}=\sqrt{l}\,t_{k}^{(l-1)},\qquad \forall k\in {\Bbb N}_0,\\
\end{array}\right.\label{53}\en
with the usual understanding that $a\, t_0^{(l)}=b\, t_k^{(0)}=0$, for all $k$ and $l$. The operators $a$ and $b$ commute in the following sense: $[a,b]t_k^{(l)}=0$ for all $k$ and $l$. Their adjoints can be computed easily and we get
$a^\dagger\, t_k^{(l)}=\sqrt{k+1}\,t_{k+1}^{(l)}$ and $b^\dagger\, t_k^{(l)}=\sqrt{l+1}\,t_{k}^{(l+1)}$. The vector $\Phi_\Theta(z,w)$ is an eigenstate of both $a$ and $b$, with eigenvalues $z$ and $w$ respectively. Indeed we can check that
\be
a\, \Phi_\Theta(z,w)=z\, \Phi_\Theta(z,w) \quad \mbox{and}\quad
b\, \Phi_\Theta(z,w)=w\, \Phi_\Theta(z,w),
\label{54}\en
for all $z, w\in \Bbb{C}$.
Another feature of these states is that they satisfy a resolution of the identity in $\Hil$. Indeed we get
\be
\frac{1}{\pi^2}\int_{\Bbb{C}}dz\int_{\Bbb{C}}dw |\Phi_\Theta(z,w)\left>\right<\Phi_\Theta(z,w)|=\sum_{k,l=0}^\infty |t_k^{(l)}\left>\right<t_k^{(l)}|=\1_\Hil.
\label{55}\en
Let us now define the following operators
$$
q_a:=\frac{a+a^\dagger}{\sqrt{2}}, \quad p_a:=\frac{a-a^\dagger}{\sqrt{2}\,i}, \quad q_b:=\frac{b+b^\dagger}{\sqrt{2}}, \quad p_b:=\frac{b-b^\dagger}{\sqrt{2}\,i}.
$$
Then, if for a generic operator $X$ we define the quantity  $$\Delta X=\sqrt{\left< \Phi_\Theta(z,w),X^2\, \Phi_\Theta(z,w)\right>-\left< \Phi_\Theta(z,w),X\, \Phi_\Theta(z,w)\right>^2},$$ we get
$$
\Delta q_a\,\Delta p_a=\frac{1}{2}, \qquad \Delta q_b\,\Delta p_b=\frac{1}{2},
$$
so that the Heisenberg uncertainty relation is saturated. For all these reasons we say that the set $\C_\Theta:=\{\Phi_\Theta(z,w), \,z,w\in\Bbb{C}\}$ is {\em a family of 2-dimensional coherent states associated to a g-on basis}.

\vspace{2mm}

It is interesting to discuss what happens if our original set $\Theta$ is replaced by a set $\Lc=\{\Lambda_j\in B(\Hil,\tilde\Hil_j), \,j\in {\Bbb N}_0\}$ which is a g-Riesz rather than a g-on basis of $(\Hil,\{\tilde\Hil_j\})$. If this is the case, then,  there exists a bounded operator $X\in B(\Hil)$ with bounded inverse $X^{-1}\in B(\Hil)$ and a g-on basis which we again call $\Theta=\{\theta_j\in B(\Hil,\tilde\Hil_j), \,j\in {\Bbb N}_0\}$ of $(\Hil,\{\tilde\Hil_j\})$, such that
$\Lambda_j=\theta_j\,X$ for all $j=0,1,2,\ldots$. Hence, it is natural to define a second set of vectors extending formulas (\ref{51}) and  (\ref{52}):
\be
\Phi_\Lc(z,w)=e^{-(|z|^2+|w|^2)/2}\,\sum_{k,l=0}^\infty\,\frac{z^k\,w^l}{\sqrt{k!\,l!}}\,u_k^{(l)}=e^{-|w|^2/2}\,\sum_{l=0}^\infty\,\frac{w^l}{\sqrt{l!}}\,\Lambda_l^\dagger \chi_l(z),
\label{56}\en
where $\chi_l(z)$ is again  defined in (\ref{52}) and $\U:=\{u_k^{(l)}=\Lambda_l^\dagger e_k^{(l)}, \,k,l\in {\Bbb N}_0\}$ is a Riesz basis in $\Hil$. This is because $u_k^{(l)}=X^\dagger\,t_k^{(l)}$, since $\T$ is an o.n. basis of $\Hil$ and $X^\dagger$ is bounded with bounded inverse.

From what we have discussed in Section II we can associate to $\Lc$ a dual set $\tilde\Lc=\{\tilde\Lambda_j=\Lambda_j\,S_\Lambda^{-1}\in B(\Hil,\tilde\Hil_j), \,j\in {\Bbb N}_0\}$, where $S_\Lc=\sum_{n=0}^\infty\,\Lambda_n^\dagger\,\Lambda_n=X^\dagger\,X$, and a set of dual states of the $\Phi_\Lc$'s, as
\be
\Phi_{\tilde\Lc}(z,w)=e^{-(|z|^2+|w|^2)/2}\,\sum_{k,l=0}^\infty\,\frac{z^k\,w^l}{\sqrt{k!\,l!}}\,v_k^{(l)}=
e^{-|w|^2/2}\,\sum_{l=0}^\infty\,\frac{w^l}{\sqrt{l!}}\,\tilde\Lambda_l^\dagger \chi_l(z),
\label{57}\en
where $\V:=\{v_k^{(l)}=\tilde\Lambda_l^\dagger e_k^{(l)}=S_\Lc^{-1}\,X^\dagger\,t_k^{(l)}, \,k,l\in{\Bbb N}_0\}$ is again a Riesz basis since $S_\Lc^{-1}\,X^\dagger$ is bounded with bounded inverse.

There exists still another set of vectors in $\Hil$ which could be naturally introduced in the game, the set $\Pc:=\{p_k^{(l)}=X^{-1}\,t_k^{(l)}, \,k,l\in{\Bbb N}_0\}$, which is still a Riesz basis for the same reason. Associated to $\Pc$ we have an a-priori different set of operators $\Lc^\uparrow=\{\Lambda_j^\uparrow=\Theta_j\,(X^{-1})^\dagger\in B(\Hil,\tilde\Hil_j), \,j\in {\Bbb N}_0\}$, which is a g-Riesz basis, and we can construct the vectors
\be
\Phi_{\Lc^\uparrow}(z,w)=e^{-(|z|^2+|w|^2)/2}\,\sum_{k,l=0}^\infty\,\frac{z^k\,w^l}{\sqrt{k!\,l!}}\,p_k^{(l)}=
e^{-|w|^2/2}\,\sum_{l=0}^\infty\,\frac{w^l}{\sqrt{l!}}\,X^{-1}\theta_l^\dagger \chi_l(z).
\label{58}\en
Standard computations show that
$$
\left<\Phi_{\Lc}(z,w),\Phi_{\Lc^\uparrow}(z,w)\right>_\Hil=1,
$$
for all $z,w\in \Bbb{C}$, and that
$$
\frac{1}{\pi^2}\int_{\Bbb{C}}\,dz\int_{\Bbb{C}}\,dw\,|\Phi_\Lc(z,w)\left>\right<\Phi_{\tilde\Lc}(z,w)|=
\frac{1}{\pi^2}\int_{\Bbb{C}}\,dz\int_{\Bbb{C}}\,dw\,|\Phi_{\tilde\Lc}(z,w)\left>\right<\Phi_{\Lc}(z,w)|=\1_\Hil.
$$
Analogously we find
$$
\frac{1}{\pi^2}\int_{\Bbb{C}}\,dz\int_{\Bbb{C}}\,dw\,|\Phi_\Lc(z,w)\left>\right<\Phi_{\Lc^\uparrow}(z,w)|=
\frac{1}{\pi^2}\int_{\Bbb{C}}\,dz\int_{\Bbb{C}}\,dw\,|\Phi_{\Lc^\uparrow}(z,w)\left>\right<\Phi_{\Lc}(z,w)|=\1_\Hil.
$$

Let us now  define, starting from the operator $a$ introduced in (\ref{53}), the following operators: $a_\Lc:=X^\dagger a (X^\dagger)^{-1}$, $a_{\tilde\Lc}:=S_\Lambda^{-1} X^\dagger a (X^\dagger)^{-1} S_\Lambda$ and $a_{\Lc^\uparrow}:=X^{-1} a X$. Needless to say, $a_\Lc=a_{\Lc^\uparrow}$ if $X$ is unitary. It is now straightforward to check that
$$
a_\Lc u_k^{(l)}=\sqrt{k}\,u_{k-1}^{(l)}, \quad a_\Lc \Phi_{\Lc}(z,w)=z\Phi_{\Lc}(z,w), \quad
a_{\tilde\Lc} v_k^{(l)}=\sqrt{k}\,v_{k-1}^{(l)}, \quad  a_{\tilde\Lc} \Phi_{\tilde\Lc}(z,w)=z\Phi_{\tilde\Lc}(z,w), \quad
$$
and
$$
a_{\Lc^\uparrow} v_k^{(l)}=\sqrt{k}\,v_{k-1}^{(l)}, \quad  a_{\Lc^\uparrow} \Phi_{\Lc^\uparrow}(z,w)=z\Phi_{\Lc^\uparrow}(z,w).
$$
This means that we can define, starting from $a$, a set of different operators acting as lowering operators on the different Riesz bases considered here.
Analogous definitions and conclusions can be deduced starting with $b$ rather than with $a$. It should be mentioned that $a_\Lc$, $a_{\tilde\Lc}$ and $a_{\Lc^\uparrow}$ do not satisfy, in general,  the same canonical commutation relation as $a$ does: for instance, $[a_\Lc,a_\Lc^\dagger]\neq\1$, in general. Hence $a_\Lc^\dagger$ is not a raising operator for $\U$.

 Nevertheless, in view of the above results, we can interpret $(\Phi_\Lc,\Phi_{\Lc^\uparrow})$ as {\em bi-coherent states}. As a matter of fact, it is possible to check that $\Phi_{\Lc^\uparrow}(z,w)=\Phi_{\tilde\Lc}(z,w)$ for all $z$ and $w$. This is somehow expected but not completely trivial, from our point of view. Indeed,
since $v_k^{(l)}=S_\Lc^{-1}\,X^\dagger\,\theta_l^\dagger\,e_k^{(l)}$ and $p_k^{(l)}=X^{-1}\,\theta_l^\dagger\,e_k^{(l)}$, for all $k,l$,  a simple manipulation shows that $v_k^{(l)}=p_k^{(l)}$. Hence $\V=\Pc$ and, as a consequence, $\Phi_{\Lc^\uparrow}(z,w)=\Phi_{\tilde\Lc}(z,w)$. Moreover, since $\tilde\Lambda_l=\theta_l X S_\Lc^{-1}$ and $\Lambda_l^\uparrow=\theta_l (X^\dagger)^{-1}$, again a simple manipulation shows that $\tilde\Lambda_l=\Lambda_l^\uparrow$ for all $l$. Then we get $\tilde\Lambda_l(S_\Lc X^{-1}-X^\dagger)=0$ for all $l$, and, recalling that $\sum_{l=0}^\infty \Lambda_l^\dagger\tilde\Lambda_l=\1$, it follows that $S_\Lc X^{-1}-X^\dagger=0$. This again implies that $\Phi_{\Lc^\uparrow}(z,w)=\Phi_{\tilde\Lc}(z,w)$ and, more than this, also that $a_\Lc=a_{\Lc^\uparrow}$.

The conclusion is that, even if  we were suggested by the structure of the system to introduce two different dual sets of $\Lc$, they really collapse in just one set: needles to say, this is reminiscent of the existence of an unique Riesz basis which is bi-orthogonal to one given Riesz basis.

\vspace{1cm}

{\bf Remark:--} we have already mentioned the existence of several possible definitions of coherent states. In the so-called non linear states the factor $k!$ in the denominator is replaced by a different sequence. Here, of course, the same could be done by replacing in  definition (\ref{51})  $k!$ and $l!$ with $x_k!$ and $y_l!$, where $x_0!=y_0!=1$ and $x_k!=x_k\,x_{k-1}$ and $y_k!=y_k\,y_{k-1}$, for all $k\geq 1$, getting a two dimensional version of the non-linear coherent states. This extension is straightforward and will not be considered here.

\end{document}